\documentclass[3p]{elsarticle}
\usepackage{pstricks}
\usepackage{pst-eps}
\usepackage{natbib}

\usepackage{amsthm,amsmath,mathrsfs,amssymb}
\usepackage{pifont}
\usepackage{epstopdf}
\usepackage{bbm}
\usepackage{color}
\usepackage{comment}
\usepackage{multirow}

\usepackage{float}
\usepackage{stfloats}
\usepackage[tight]{subfigure}
\usepackage{graphicx,color,url}

\usepackage{keyval}
\usepackage{algorithm}
\usepackage{algorithmic}
\usepackage{tabularx,booktabs}
\usepackage[colorlinks, citecolor=blue]{hyperref}
\usepackage{makecell}
\usepackage{stackengine}
\usepackage{bm}

\newsavebox{\tempbox}

\newtheorem{theorem}{Theorem}

\newtheorem{remark}{Remark}
\newtheorem{lemma}{Lemma}

\begin{document}

\begin{frontmatter}

\title{Volumetric Spline Parameterization for Isogeometric Analysis}

\author{Maodong Pan}

\author{Falai Chen\corref{cor}}
\ead{chenfl@ustc.edu.cn}
\cortext[cor]{Corresponding author.}

\author{Weihua Tong}

\address{School of Mathematical Sciences, University of Science and Technology of China, Hefei, Anhui, 230026, PR China}

\begin{abstract}
Given the spline representation of the boundary of a three dimensional domain, constructing a volumetric spline parameterization of the domain (i.e., a map from a unit cube to the domain) with the given boundary is a fundamental problem in isogeometric analysis. A good domain parameterization should satisfy the following criteria: (1) the parameterization is a bijective map; and (2) the map has lowest possible distortion.  However, none of the state-of-the-art volumetric parameterization methods has fully addressed the above issues. In this paper, we propose a three-stage approach for constructing volumetric parameterization satisfying the above criteria. Firstly, a harmonic map is computed between a unit cube and the computational domain. Then a bijective map modeled by a max-min optimization problem is computed in a coarse-to-fine way, and an algorithm based on divide and conquer strategy is proposed to solve the optimization problem efficiently. Finally, to ensure high quality of the parameterization, the MIPS (Most Isometric Parameterizations) method is adopted to reduce the conformal distortion of the bijective map. We provide several examples to demonstrate the feasibility of our approach and to compare our approach with some state-of-the-art methods. The results show that our algorithm produces bijective parameterization with high quality even for complex domains.
\end{abstract}

\begin{keyword}
Volumetric parameterization, isogeometric analysis, max-min optimization, MIPS.
\end{keyword}

\end{frontmatter}

\section{Introduction}
\label{sec:intro}
Isogeometric analysis (IGA) has recently become a hotspot in numerical analysis and geometric modeling communities since
it integrates two related disciplines: Computer Aided Design (CAD) and Computer Aided Engineering (CAE)~\citep{hughes2005isogeometric}.
IGA overcomes the unnecessary data exchange between CAD models and the simulation software, and it has been successfully applied in various disciplines, such as structural vibration~\citep{cottrell2006isogeometric}, shell analysis~\citep{benson2011large}, phase transition phenomena~\citep{gomez2008isogeometric} and shape optimization~\citep{qian2010full,nguyen2012isogeometric}.

However, CAD systems provide only the boundary representation of an object (computational domain), while CAE typically requires the interior parameterization of the object. Thus constructing a volume parametric representation for the computational domain from the given boundary data is an essential step in IGA, and it is called $\emph{volumetric parameterization}$ (see Fig.~\ref{DomainParameterization}). Volumetric parameterization has great effect on the accuracy and efficiency in subsequent analysis~\citep{cohen2010analysis,xu2011parameterization,pilgerstorfer2014bounding,xu2015two}.
As sated in~\citep{falini2015planar,xu2018constructing,pan2018low}, a good volumetric parameterization should meet two basic requirements: firstly, it doesn't have self-intersections, i.e., the map from the parametric domain (generally a unit cube) to the computational domain is injective; secondly, the distortion of the map should be as small as possible, i.e., the volumes and angles after mapping should be preserved as much as possible. So far several approaches have been proposed to solve the volumetric parameterization problem, and they can be generally classified into two categories: (1) volumetric spline parameterization from boundary triangulations~\citep{martin2009volumetric,escobar2011new,zhang2012solid,wang2013trivariate,liu2014volumetric}; and (2) analysis-suitable volumetric parameterization from spline boundaries~\citep{aigner2009swept,nguyen2010parameterization,zhang2013conformal,xu2013constructing,xu2013analysis,xu2013optimal,xu2014high,wang2014optimization,lopez2017spline}. However, none of these methods have fully addressed the above issues. For example, no method guarantees the bijectivity of the parameterization, especially for complex computational domains. Thus constructing high-quality volumetric parameterizations of complex computational domains remains a big challenge in IGA.

\begin{figure}[!htbp]
    \centering
    \subfigure[Boundary spline surfaces]{
        \label{DomainParameterization:BoundaryNurbs}
        \includegraphics[width=0.25\textwidth]{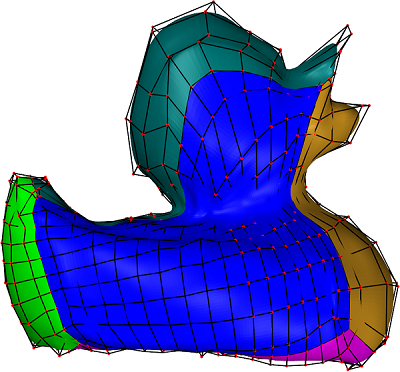}
    }
    \hspace{0.07\textwidth}
    \subfigure[A map from the parametric domain to the computational domain]{
        \label{DomainParameterization:VolumetricParameterization}
        \includegraphics[width=0.60\textwidth]{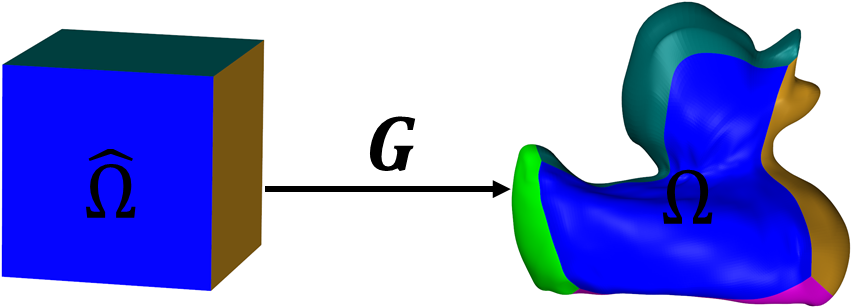}
    }
	\caption{Volumetric parameterization:~\subref{DomainParameterization:BoundaryNurbs} the six input B-spline boundary surfaces,~\subref{DomainParameterization:VolumetricParameterization} a map $\mathbf{G}$ from the unit cube $\hat{\Omega}$ to the computational domain $\Omega$.}
    \label{DomainParameterization}
\end{figure}

This paper describes a new approach for constructing high-quality volumetric parameterization. Specifically, suppose we are given the six spline boundary surfaces of a computational domain (of genus-zero), our goal is to construct a trivariate spline representation for the computational domain such that the parameterization is bijective and has low distortion. We propose a three-stage approach to solve the problem. Firstly, an initial parameterization is obtained by computing a harmonic map from the parametric domain (a unit cube) to the computational domain. Then we compute a bijective parameterization of the computational domain by solving a max-min optimization problem in a coarse-to-fine way. Finally, we improve the parameterization quality by minimizing the conformal distortion of the map using the MIPS method.

The remainder of this paper is organized as follows. Section~\ref{sec:work} reviews some related work about domain parameterization. Section~\ref{sec:preliminaries} presents some preliminary knowledge about the representation of volumetric parameterization, the sufficient condition for an injective parameterization and the distortion measurement of a parametrization which will be used in our method. In Section~\ref{sec:approach}, we propose three mathematical models followed by algorithms to compute a high-quality map for volumetric parameterization. Some examples are demonstrated in Section~\ref{sec:result} to show the effectiveness of the proposed method, and comparisons with the nonlinear optimization methods~\citep{xu2013analysis,wang2014optimization} are also provided. Finally, we conclude the paper with a summary and future work in Section~\ref{sec:conclusion}.

\section{Related work}
\label{sec:work}

In this section, we will review some related works on domain parameterization and emphasis will be put on volumetric parameterization.

For planar domain parameterization, a direct solution is the discrete Coons patches introduced by Farin and Hansford~\citep{farin1999discrete}. Gravesen et al.~\citep{gravesen2012planar} put forward a spring model to solve the problem. These two methods are simple but the resulting parameterization may not be injective. To ensure a bijective parameterization, several approaches were proposed by solving nonlinear optimization problems~\citep{xu2011parameterization,nian2016planar,ugalde2018injectivity}. For complex computational domains, however, single patch representations do not provide sufficient flexibility, and multi-patch structures were put forward to tackle the parameterization problem~\citep{xu2015two,buchegger2017planar,xu2018constructing,xiao2018computing}. Recently, Juettler et al.~\citep{juttler2017low} and Pan et al.~\citep{pan2018low} investigated the low-rank parameterization of planar domains which can improve the efficiency of assembling stiffness matrices in IGA.

Compared with planar case, volumetric parameterization is more challenging. Depending on the boundary representation (triangular meshes or splines) of a 3D computational domain, volumetric parameterizaion methods can be classified into two categories. For a domain whose boundary is represented by triangular meshes, Martin et al.~\citep{martin2009volumetric} proposed a trivariate B-spline parameterization method based on discrete volumetric harmonic functions. In~\citep{escobar2011new}, the authors focused on constructing a solid T-spline from the parametric mapping of tetrahedral meshes generated by meccano method from the input surface mesh. Zhang et al.~\citep{zhang2012solid} developed a procedure to construct a T-spline representation of a genus-zero solid.  When the computational domain has a topology of non-zero genus, polycube is a useful tool to solve the parameterization problem~\citep{li2007harmonic,wang2008polycube,wang2008user}. To extend the work~\citep{zhang2012solid} to domains with arbitrary topology, Wang et al.~\citep{wang2013trivariate} used the polycube mapping combined with subdivision and pillowing techniques to generate high-quality T-spline representations. Inspired by CSG Boolean operations, Liu et al.~\citep{liu2014volumetric} firstly built the polycubes from the input triangle mesh based on Boolean operations and then computed the volumetric parameterization by performing an octree subdivision on the polycubes. Later on skeleton-based polycube constructions were used to generate T-spline parameterizations~\citep{liu2015feature}. For computational domains with spline boundary representations,  Aigner et al.~\citep{aigner2009swept} presented a method for generating NURBS parameterizations of swept volumes via sweeping a closed curve. Nguyen et al.~\citep{nguyen2010parameterization} applied a sequence of harmonic maps to construct a volumetric parameterization, and the harmonic maps were obtained by solving some variational problems. In~\citep{zhang2013conformal}, the construction of volumetric conformal T-spline parameterization from boundary T-spline representation was studied by using octree structure and boundary offset. An optimization-based method was proposed in~\citep{xu2013analysis}, where the B-spline parameterization is found by solving a constraint optimization problem. A similar technique was proposed by the same authors in~\citep{xu2013constructing}. From six boundary B-spline surfaces, Wang and Qian~\citep{wang2014optimization} proposed an efficient method by combining the constraint aggregation and hierarchical optimization technique to obtain valid trivariate B-spline solids. Recently, a method based on a new T-mesh untangling and smoothing procedure was proposed to construct T-spline parameterizations for 2D and 3D geometries~\citep{lopez2017spline}.

Despite various methods for volumetric parameterization were proposed, however to the best of our knowledge, none of the methods can ensure the bijectivity of the parameterization, and the quality of the parameterization is not well investigated. The goal of the current paper is to present a new volumetric parameterization technique which can guarantee the bijectivity of the parameterization and has lowest possible distortion.

\section{Preliminaries}
\label{sec:preliminaries}
In this section, we present some preliminary knowledge about the representation of volumetric parameterizations, the sufficient condition for a bijective parameterization and the distortion measurement of a parameterization.

\subsection{Representation of volumetric parameterization}
In this paper, we assume a simply connected computational domain $\Omega$ is parameterized by a vector-valued trivariate tensor product B-spline function:
\begin{equation}\label{B-spline rep}
\begin{aligned}
{\mathbf{G}}(\xi,\eta, \zeta)&:=(u(\xi,\eta,\zeta),v(\xi,\eta,\zeta),w(\xi,\eta,\zeta))\\
&:=\sum_{i=0}^{l}\sum_{j=0}^{m}\sum_{k=0}^{n}\emph{\textbf{P}}_{ijk}N_{i}^{p}(\xi)N_{j}^{q}(\eta)N_{k}^{r}(\zeta),\quad (\xi,\eta,\zeta)\in \hat\Omega:=[0,1]^3,
\end{aligned}
\end{equation}
where $\emph{\textbf{P}}_{ijk}=\{x_{ijk},y_{ijk},z_{ijk}\}\in \mathbb{R}^3$ are the control points, $N_{i}^{p}(\xi)$, $N_{j}^{q}(\eta)$ and $N_{k}^{r}(\zeta)$ are the B-spline basis functions of degree $p$, $q$ and $r$ w.r.t the knot sequences $U$, $V$ and $W$ in $[0,1]$ respectively. $U\times V \times W$ defines a tensor product mesh $\mathcal T$ over which the B-spline function $\mathbf G$ is a piecewise polynomial, i.e, $\mathbf G$ is a polynomial over each cell (a cube) of the mesh $\mathcal T$.

\subsection{Sufficient condition for a bijective parameterization}
\label{sec:BijectiveCondition}

To guarantee the bijectivity of a map, a necessary condition is the positivity of the Jacobian of the map. In general, we have

\begin{lemma}\label{injectivecondition}(\citep{more1973p})
A continuously differential map $\mathbf G$ is locally bijective provided its Jacobian, denoted as $\det(J_{\mathbf G})$, does not vanish on the parametric domain, and the global bijectivity of $\mathbf G$ is guaranteed if it is locally bijective
on the parametric domain, and the computational domain is simply connected and the restriction of $\mathbf G$ on the domain boundary is bijective.
\end{lemma}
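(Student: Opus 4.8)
The plan is to establish the two assertions of the lemma separately, since local bijectivity is a genuinely infinitesimal statement (provable by the Inverse Function Theorem) while global bijectivity requires a topological/degree argument to rule out distinct points of the cube mapping to the same image point.

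The plan is to prove the two assertions separately, because local bijectivity is an infinitesimal statement provable directly from the Inverse Function Theorem, whereas global bijectivity is a genuinely topological fact requiring a degree (or covering-space) argument to exclude two distinct points of the cube sharing an image.

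First I would dispatch the local statement. At any point where $\det(J_{\mathbf G})\neq 0$, the differential is an invertible linear map, so the Inverse Function Theorem gives a neighborhood on which $\mathbf G$ is a $C^1$ diffeomorphism, hence bijective; since the Jacobian is nonvanishing throughout $\hat\Omega$ this holds everywhere. As a by-product, the continuous function $\det(J_{\mathbf G})$ is nowhere zero on the connected set $\hat\Omega$ and therefore has constant sign; after possibly composing with a reflection we may take it positive, so $\mathbf G$ is an orientation-preserving local diffeomorphism and in particular an open map.

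For the global statement I would invoke the Brouwer degree. Since $\Omega$ is simply connected and genus-zero, both $\partial\hat\Omega$ and $\partial\Omega$ are topological $2$-spheres, and the hypothesis that $\mathbf G|_{\partial\hat\Omega}$ is a bijection makes it an orientation-preserving homeomorphism $S^2\to S^2$, of degree $+1$. The key fact is that for any interior point $y\in\mathrm{int}(\Omega)$ the quantity $\deg(\mathbf G,\hat\Omega,y)$ depends only on the boundary map and equals the degree of that boundary homeomorphism, namely $1$. On the other hand, $y$ is a regular value because the Jacobian never vanishes, so the degree is the signed preimage count $\sum_{x\in\mathbf G^{-1}(y)}\mathrm{sign}\,\det(J_{\mathbf G}(x))$; as every sign is $+1$, this is simply the number of preimages. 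Hence every interior point has exactly one preimage, and together with the boundary bijection this shows $\mathbf G$ is one-to-one and onto, i.e.\ globally bijective.

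The main obstacle is the degree-theoretic step: one must justify rigorously that the interior mapping degree coincides with the degree of the boundary map, and that the simple-connectedness and genus-zero hypotheses genuinely force $\partial\Omega$ to be a sphere so that this machinery applies. A route that sidesteps explicit degree computations is to note that a local homeomorphism from the compact set $\hat\Omega$ is proper, hence a covering map onto its connected image; because $\Omega$ is simply connected the covering must have a single sheet, forcing $\mathbf G$ to be a homeomorphism. Either way, the delicate points are the behavior along the boundary and verifying that no interior point of the cube is mapped onto $\partial\Omega$, which is what ultimately lets the boundary bijection propagate to the interior.
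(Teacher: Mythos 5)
The paper offers no proof of this lemma at all: it is quoted as a known result with a citation to Mor\'e and Rheinboldt (1973), so there is no ``paper proof'' to match your argument against. Judged on its own terms, your proposal is essentially correct and is the standard way to establish such a global inversion statement. The local part via the Inverse Function Theorem is immediate, and the degree-theoretic route for the global part is sound: for a regular value $y$ in the interior of $\Omega$ the Brouwer degree $\deg(\mathbf G,\hat\Omega,y)$ is determined by the boundary restriction and equals $\pm 1$, while it also equals the signed preimage count $\sum_{x\in\mathbf G^{-1}(y)}\mathrm{sign}\det(J_{\mathbf G}(x))$ with all signs equal, forcing exactly one preimage; points outside $\overline\Omega$ have degree $0$ and hence no preimages, and the case $y\in\partial\Omega$ is excluded by the openness of $\mathbf G$ on $\mathrm{int}(\hat\Omega)$ exactly as you indicate. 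Two small cautions: first, your alternative covering-space argument needs extra care because $\hat\Omega$ is a manifold \emph{with boundary} --- the statement ``a proper local homeomorphism is a covering map'' requires that boundary maps to boundary and interior to interior, which is precisely the delicate point you flag, so the degree argument is the cleaner of your two routes; second, you do not need to assume $\partial\Omega$ is a sphere a priori --- since $\mathbf G|_{\partial\hat\Omega}$ is a continuous bijection from the compact set $\partial\hat\Omega\cong S^2$, its image is automatically a topological sphere and Jordan--Brouwer separation applies. With those points tightened, your write-up would constitute a complete proof of a fact the paper merely cites.
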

In this work, we always assume that the computational domain $\Omega$ is simply connected, and a bijective boundary correspondence of the parametric domain $\hat{\Omega}$ and computational domain $\Omega$ is established. Thus in this case, a locally bijective parameterization is also globally bijective.

From Lemma~\ref{injectivecondition}, it is obvious that if the Jacobian $\det(J_{\mathbf G}(\xi,\eta,\zeta))$ is positive everywhere on the parametric domain, i.e.,
\begin{equation}
    \begin{split}
\det(J_{\mathbf G}(\xi,\eta,\zeta))=\left|\begin{array}{cccc}
    u_{\xi} & u_{\eta} & u_{\zeta} \\
    v_{\xi} & v_{\eta} & v_{\zeta}\\
    w_{\xi} & w_{\eta} & w_{\zeta}
\end{array}\right|>0, \quad \forall (\xi,\eta,\zeta)\in [0,1]^3,
    \end{split}
\end{equation}
then the parametrization $\mathbf G$ is globally bijective.

Owning to the good properties of B-splines~\citep{morken1991some}, the Jacobian of a volumetric B-spline parameterization (\ref{B-spline rep}) is itself a higher order trivariate B-spline which has the following form:
\begin{equation}\label{highorderBspline}
\det(J_{\mathbf G}(\xi,\eta,\zeta))=\sum_{i=0}^{3l-1}\sum_{j=0}^{3m-1}\sum_{k=0}^{3n-1}G_{ijk}N_{i}^{3p-1}(\xi)N_{j}^{3q-1}
(\eta)N_{k}^{3r-1}(\zeta), \quad \forall (\xi,\eta,\zeta)\in [0,1]^3,
\end{equation}
where $\{G_{ijk}\}_{i,j,k=0}^{3l-1,3m-1,3n-1}$ denote the coefficients of the higher order B-spline. From the convex hull property of B-spline function, a sufficient condition for the bijectivity of the  parameterization (\ref{B-spline rep}) is that all the coefficients $\{G_{ijk}\}$ are positive. However, this is a very strict condition for a bijective parameterization as reported in~\cite{xu2013optimal}. In order to relax the condition, we convert the B-spline representation (\ref{highorderBspline}) into B{\'e}zier forms:
\begin{equation}\label{highorderBezierSurface}
 B^\gamma(\xi,\eta,\zeta)=\sum_{i=0}^{3p-1}\sum_{j=0}^{3q-1}\sum_{k=0}^{3r-1}B_{ijk}^\gamma B_{i}^{3p-1}(\xi)B_{j}^{3q-1}(\eta)B_{k}^{3r-1}(\zeta), \quad (\xi,\eta,\zeta)\in {\hat\Omega}_{\gamma}, \, \gamma=1,2,\ldots, \tau,
\end{equation}
where $\{B^\gamma_{ijk}\}$ are the control coefficients, $B_{i}^{3p-1}(\xi)$, $B_{j}^{3q-1}(\eta)$ and $B_{k}^{3r-1}(\zeta)$ denote the Bernstein basis polynomials of degree $3p-1$, $3q-1$ and $3r-1$ respectively, and ${\hat\Omega}_{\gamma}$ is a cell of the tensor product mesh $\mathcal T$. The coefficients of these trivariate B{\'e}zier polynomials can be obtained by multiple knot insertion using the blossoming technique~\citep{farin2002curves}. Now if all the coefficients $\{B^\gamma_{ijk}\}$ are positive, then the Jacobian $\det(J_{\mathbf G})$ is also positive on the parametric domain $\hat\Omega$. On the other hand, if some coefficients in $B^\gamma(\xi,\eta,\zeta)$ are negative, we can further subdivide the B{\'e}zier polynomial $B^\gamma(\xi,\eta,\zeta)$
into eight sub-polynomials at the parametric values $(\xi, \eta, \zeta)=(0.5, 0.5, 0.5)$ and check the positivity of the coefficients of each sub-polynomial. Again if the coefficients of all the B{\'e}zier polynomials (including polynomials after subdivision) are positive, the Jacobian $\det(J_{\mathbf G})$ is also positive. This process can be repeated until certain level of subdivisions. In this way, we can obtain a set of relaxed sufficient conditions for the bijectivity of the map $\mathbf G$. However, a risk of such process is that the number of inequality constraints will be huge~\citep{xu2018constructing}.

Instead of using the above sufficient conditions as constraints, in this paper we will directly impose positivity constraints of the Jacobian at a set of collocation points, and apply the above sufficient conditions to check the positivity of the Jacobian on the whole parametric domain $\hat\Omega$.

\subsection{The distortion measurement of a parameterization}
\label{sec:distortiondefinition}
The distortion of a differentiable map $\mathbf{G}$ at a point $\mathbf{x}$ is some measure about how $\mathbf{G}$ changes at the vicinity of $\mathbf{x}$. Here we introduce two common measures of distortion: volume distortion and conformal (angular) distortion. In addition, the fairness measure for a map is also introduced.

\medskip

\noindent\textbf{Volume distortion.} The volume distortion is the determinant of the Jacobian of the map divided by the volume of $\Omega$:
\begin{equation}
D_{vol}(\mathbf{G};\mathbf{x})=\frac{\det(J_{\mathbf G}({\mathbf x}))}{\text{Vol}(\Omega)},
\end{equation}
Thus $D_{vol}(\mathbf{G};\mathbf{x})$ is also called the scaled Jacobian of the map $G$. The optimal map is
the one that satisfies $D_{vol}(\mathbf{G};\mathbf{x})=1$ everywhere.

\medskip

\noindent\textbf{Conformal distortion.} A conformal map preserves angles and thus it can produce iso-parametric orthogonal nets. Denote the singular values of $J_{\mathbf{G}}(\mathbf{x})$ as $\sigma_1(\mathbf{G},\mathbf{x})$, $\sigma_2(\mathbf{G},\mathbf{x})$ and $\sigma_3(\mathbf{G},\mathbf{x})$, then the conformal distortion can be defined by MIPS function~\citep{hormann1999mips}:
\begin{equation}\label{disconformal}
\begin{aligned}
D_{con}(\mathbf{G};\mathbf{x})&= \frac{1}{8}\left(\frac{\sigma_1(\mathbf{x})}{\sigma_2(\mathbf{x})}+\frac{\sigma_2(\mathbf{x})}{\sigma_1(\mathbf{x})}\right)
\left(\frac{\sigma_2(\mathbf{x})}{\sigma_3(\mathbf{x})}+\frac{\sigma_3(\mathbf{x})}{\sigma_2(\mathbf{x})}\right)
\left(\frac{\sigma_3(\mathbf{x})}{\sigma_1(\mathbf{x})}+\frac{\sigma_1(\mathbf{x})}{\sigma_3(\mathbf{x})}\right)\\
&=\frac{1}{8}(\|J_{\mathbf{G}}(\mathbf{x})\|_F^2\|J_{\mathbf{G}}(\mathbf{x})^{-1}\|_F^2-1),
\end{aligned}
\end{equation}
where $||\cdot||$ stands for the Frobenius norm.
It's easy to see that $D_{con}(\mathbf{G};\mathbf{x})$ achieves the minimal value $1$ if and only if $\sigma_1(\mathbf{x})=\sigma_2(\mathbf{x})=\sigma_3(\mathbf{x})$, that is, $\mathbf{G}$ is a conformal map.

\medskip

\noindent\textbf{Fairness.} Besides that the map should have smallest distortion, we also hope the map is smooth and fair. This can be characterized by the triharmonic quantity:
\begin{equation}
D_{fair}(\mathbf{G};\mathbf{x})=\|Hu(\mathbf{x})\|_F^2+\|Hv(\mathbf{x})\|_F^2+\|Hw(\mathbf{x})\|_F^2,
\end{equation}
where $Hu$, $Hv$ and $Hw$ are the Hessians of $u$, $v$ and $w$ respectively.

\section{A three-stage parameterization method of computational domains}
\label{sec:approach}

Given the B-spline representations of the six boundary surfaces of a computational domain $\Omega$, our goal is to compute a trivariate B-spline representation for $\Omega$, i.e., a map from the unit cube $\hat{\Omega}=[0,1]^3$ to $\Omega$ which is bijective and has low distortion. In this section, we introduce a three-stage approach towards the goal.

\subsection{Computing a harmonic map}
\label{subsection:initialmap}
The first stage of our method is to compute a good initial parameterization which is critical for accelerating the convergence of subsequent optimization. There are several ways to construct an initial parameterization such as discrete Coons volumetric interpolation~\citep{farin1999discrete} and deformation based method~\citep{wang2014optimization}. In this paper, we compute a harmonic map as the initial parameterization. The computation is similar to those in~\citep{wang2003volumetric,su2017volume} except that our representation is spline-based.

According to the smooth harmonic map theory~\citep{schoen1997lectures}, a harmonic map $\mathbf{G}$ is a function satisfying Laplace's equation, i.e.,
\begin{equation*}
\Delta \mathbf{G}=\mathbf{0},
\end{equation*}
where $\Delta=\frac{\partial^2}{\partial \xi^2}+\frac{\partial^2}{\partial \eta^2}
+\frac{\partial^2}{\partial \zeta^2}$ is the Laplace operator.
Thus the harmonic map under the given boundary information can be found by minimizing the following energy
\begin{equation}\label{harmonicmap}
\begin{split}
\min_{\mathbf{G}}\quad &\int_{\hat{\Omega}}\|\Delta \mathbf{G}\|^2\mathrm{d}\xi\mathrm{d}\eta\mathrm{d}\zeta\\
\mathrm{s.t.}\quad &\mathbf{G}\big|_{\partial \hat{\Omega}}\ \text{is given}.
\end{split}
\end{equation}
This is a quadratic optimization problem and the solution can be obtained by solving a sparse and symmetric linear system of equations. The preconditioned conjugate gradient method with incomplete Cholesky factorization is applied in solving the linear system. From Fig.~\ref{HarmonicMapping}, we can see that the initial parameterization constructed using harmonic map is nearly valid (i.e., the parameterization is bijective except at regions near the boundary), which is superior to most of the other initialization methods.

\begin{figure}[!htbp]
    \centering
    \subfigure[]{
        \label{HarmonicMapping:BoundaryNurbs}
        \includegraphics[width=0.235\textwidth]{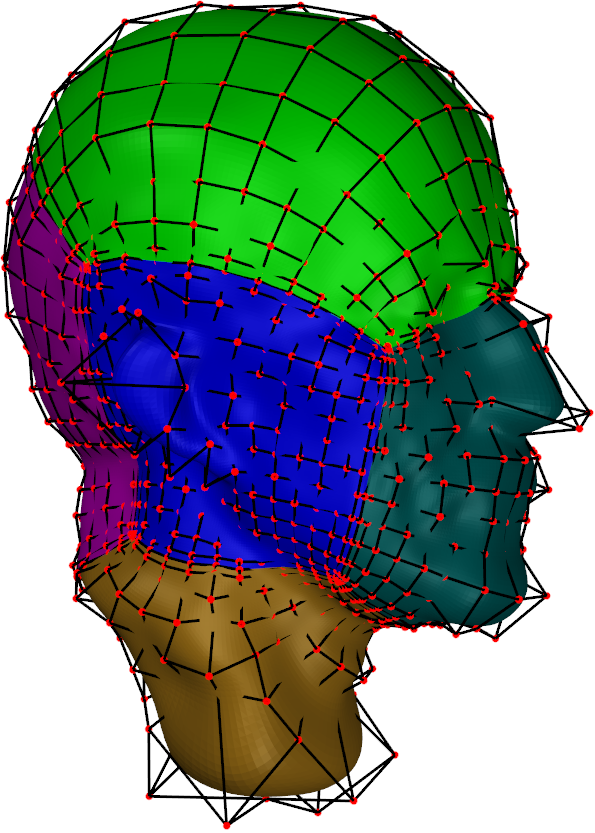}
    }
    \hspace{0.00\textwidth}
    \subfigure[]{
        \label{HarmonicMapping:InteriorOne}
        \includegraphics[width=0.24\textwidth]{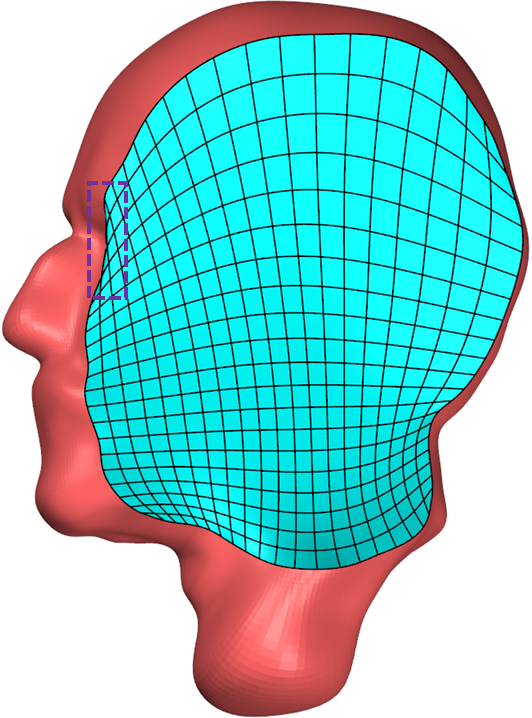}
    }
    \hspace{0.00\textwidth}
    \subfigure[]{
        \label{HarmonicMapping:InteriorTwo}
        \includegraphics[width=0.23\textwidth]{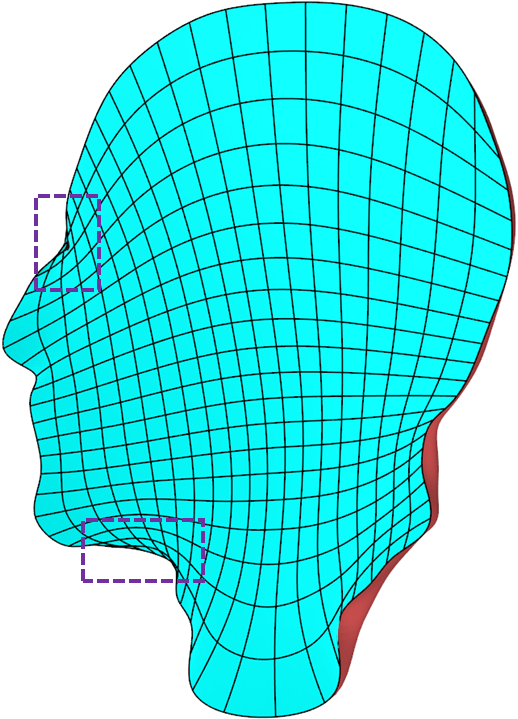}
    }
    \hspace{0.00\textwidth}
    \subfigure[]{
        \label{HarmonicMapping:InteriorThree}
        \includegraphics[width=0.165\textwidth]{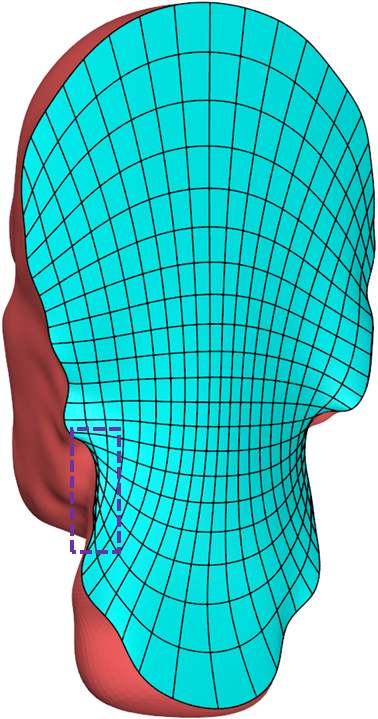}
    }
	\caption{The initial parameterization for the Max Planck model using harmonic mapping.~\subref{HarmonicMapping:BoundaryNurbs} Six boundary NURBS surfaces,~\subref{HarmonicMapping:InteriorOne} to~\subref{HarmonicMapping:InteriorThree} show the interior cross-sections of the parameterization, and the dotted boxes mark the invalid regions.}
    \label{HarmonicMapping}
\end{figure}

\subsection{Construction of a bijective parameterization}
\label{sec:approach:constructbijectiveparameterization}
With the above constructed harmonic map as the input, the next stage of our method is to compute a bijective parameterization of the computational domain $\Omega$. This is the key step of our algorithm.

\subsubsection{Theoretic foundation}
To ensure the bijectivity of the parameterization $\mathbf G$, the sufficient conditions proposed in Section~\ref{sec:BijectiveCondition} were applied by previous literature like~\citep{xu2013analysis} and~\citep{wang2014optimization}.
However, the number of such conditions can be very large, which makes the construction of a bijective parameterization very difficulty.

In this paper, we propose an alternative approach to solve the problem. We maintain a set of collocation points $\mathcal{P}=\{\mathbf{p}_k\}_{k=1}^{N}\subset \hat{\Omega}$, over which we explicitly monitor and control the bijectivity of $\mathbf G$, that is,
\begin{equation}\label{discretebijectivity}
\det(J_{\mathbf G}(\mathbf{p}_k))>0,\quad k=1,\ldots,N.
\end{equation}
Obviously, if the collocation points are dense enough, then the discrete bijectivity constraints~(\ref{discretebijectivity}) can ensure the bijectivity of the map $\mathbf G$ over the whole domain $\hat{\Omega}$. The theoretic foundation of this idea is based on the following lemma and theorem.

\begin{lemma}\label{jacobianlipschitzcontinuous}
The determinant of the Jacobian of a map ${\mathbf G}$ is Lipschitz continuous if
\begin{equation*}
L=\max_{\mathbf{x}\in\hat{\Omega}}\|\nabla \det(J_{\mathbf G}(\mathbf{x}))\|<+\infty.
\end{equation*}
In fact,
\begin{equation}
|\det(J_{\mathbf G}(\mathbf{x}))-\det(J_{\mathbf G}(\mathbf{y}))|\leq L\|\mathbf{x}-\mathbf{y}\|,\quad \forall \mathbf{x},\mathbf{y}\in \hat{\Omega}.
\end{equation}
\end{lemma}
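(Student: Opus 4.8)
The plan is to recognize that the scalar function $f(\mathbf{x}) := \det(J_{\mathbf G}(\mathbf{x}))$ has, by hypothesis, gradient bounded by $L$ everywhere on $\hat{\Omega}$, and to convert this pointwise bound into the global Lipschitz estimate by integrating along straight line segments. The first structural fact I would invoke is that the parametric domain $\hat{\Omega} = [0,1]^3$ is convex, so that for any two points $\mathbf{x}, \mathbf{y} \in \hat{\Omega}$ the segment $\gamma(t) := \mathbf{x} + t(\mathbf{y}-\mathbf{x})$, $t \in [0,1]$, remains inside $\hat{\Omega}$. This convexity is exactly what allows the gradient bound to be applied uniformly along the connecting path.

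Next I would set $\phi(t) := f(\gamma(t))$ and, by the fundamental theorem of calculus, write $f(\mathbf{y}) - f(\mathbf{x}) = \phi(1) - \phi(0) = \int_0^1 \phi'(t)\,dt$. The chain rule gives $\phi'(t) = \nabla f(\gamma(t)) \cdot (\mathbf{y}-\mathbf{x})$, so applying the triangle inequality for integrals followed by the Cauchy--Schwarz inequality pointwise yields $|f(\mathbf{y}) - f(\mathbf{x})| \le \int_0^1 \|\nabla f(\gamma(t))\|\,\|\mathbf{y}-\mathbf{x}\|\,dt$. Bounding $\|\nabla f(\gamma(t))\| \le L$ by the hypothesis and integrating the resulting constant over $[0,1]$ gives precisely $|f(\mathbf{y}) - f(\mathbf{x})| \le L\|\mathbf{x}-\mathbf{y}\|$, the claimed inequality. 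This part is routine calculus, and I would not grind through the details.

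The one point that requires care, and what I expect to be the only genuine obstacle, is the regularity of $f$. Since $\mathbf{G}$ is a tensor-product B-spline, $\det(J_{\mathbf G})$ is only piecewise polynomial, so $\nabla f$ may fail to exist on the interfaces between cells of the mesh $\mathcal{T}$, and the fundamental theorem of calculus does not apply verbatim. I would resolve this by observing that $f$ is continuous on $\hat{\Omega}$ and smooth on the interior of each cell, and that the segment $\gamma$ crosses these interfaces at only finitely many parameter values; splitting the integral over the corresponding subintervals, on each of which $\phi$ is $C^1$, and summing the estimates recovers the bound, with $L$ understood as the supremum of $\|\nabla f\|$ over the points where the gradient is defined. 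Equivalently, one may appeal to the standard fact that a continuous function with essentially bounded gradient on a convex domain is automatically Lipschitz with the same constant.
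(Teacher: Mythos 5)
Your proof is correct and takes essentially the same route as the paper, which disposes of the lemma in one line by invoking the mean value theorem; your version---integrating $\nabla\det(J_{\mathbf G})$ along the segment joining $\mathbf{x}$ and $\mathbf{y}$, using convexity of $\hat{\Omega}=[0,1]^3$ and Cauchy--Schwarz---is the standard rigorous form of that argument. Your extra care about the piecewise-polynomial regularity of $\det(J_{\mathbf G})$ across the cell interfaces of $\mathcal{T}$, handled by splitting the segment at the finitely many crossing parameters, addresses a point the paper silently glosses over and is resolved correctly.
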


This lemma can be proved using the mean value theorem and it relates the bijectivity of the map $\mathbf{G}$ over the parametric domain $\hat\Omega$ to the bijectivity on the set of collocation points $\mathcal{P}$. Specifically, we have the following theorem.
\begin{theorem}\label{injectivitycontrol}
Given a set of collocation points $\mathcal{P}$ and a positive lower bound $\delta$ of the determinant of the Jacobian of the map $\mathbf{G}$ over $\mathcal{P}$, the bijectivity of $\mathbf{G}$ can be accomplished if the fill distance $d(\mathcal{P},\hat{\Omega})$ of the collocation points in $\hat{\Omega}$ satisfies
\begin{equation*}
d(\mathcal{P},\hat{\Omega})<\frac{\delta}{L},
\end{equation*}
where the fill distance $d(\mathcal{P},\hat{\Omega})$ is the one-sided Hausdorff distance:
\begin{equation*}
d(\mathcal{P},\hat{\Omega})=\max_{\mathbf{x}\in \hat{\Omega}}\min_{\mathbf{p}\in\mathcal{P}}\|\mathbf{x}-\mathbf{p}\|.
\end{equation*}
\end{theorem}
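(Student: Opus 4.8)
The plan is to show that the stated hypotheses force $\det(J_{\mathbf G})$ to be strictly positive at \emph{every} point of $\hat\Omega$, not merely at the collocation points, and then to invoke Lemma~\ref{injectivecondition} to promote this pointwise positivity to global bijectivity under the standing assumptions (simply connected $\Omega$ and a bijective boundary correspondence between $\hat\Omega$ and $\Omega$). So the whole argument reduces to a single estimate at an arbitrary interior point.

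First I would fix an arbitrary $\mathbf{x}\in\hat\Omega$ and locate its nearest collocation point. By the definition of the fill distance, there is some $\mathbf{p}\in\mathcal{P}$ with
\[
\|\mathbf{x}-\mathbf{p}\|=\min_{\mathbf{q}\in\mathcal{P}}\|\mathbf{x}-\mathbf{q}\|\leq d(\mathcal{P},\hat\Omega).
\]
Next I would apply the Lipschitz estimate of Lemma~\ref{jacobianlipschitzcontinuous} together with the fill-distance hypothesis to obtain
\[
|\det(J_{\mathbf G}(\mathbf{x}))-\det(J_{\mathbf G}(\mathbf{p}))|\leq L\|\mathbf{x}-\mathbf{p}\|\leq L\,d(\mathcal{P},\hat\Omega)<L\cdot\frac{\delta}{L}=\delta.
\]
Combining this with the prescribed lower bound $\det(J_{\mathbf G}(\mathbf{p}))\geq\delta$ at the collocation point then yields
\[
\det(J_{\mathbf G}(\mathbf{x}))>\det(J_{\mathbf G}(\mathbf{p}))-\delta\geq\delta-\delta=0.
\]
Since $\mathbf{x}\in\hat\Omega$ was arbitrary, $\det(J_{\mathbf G})>0$ throughout $\hat\Omega$, so $\mathbf G$ is locally bijective, and Lemma~\ref{injectivecondition} then delivers the claimed global bijectivity.

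There is no genuine analytical obstacle here once the two preceding lemmas are granted: the argument is essentially one chain of inequalities. The only points requiring care are bookkeeping matters. I would make sure to preserve the \emph{strict} inequality throughout, noting that the strictness comes entirely from the hypothesis $d(\mathcal{P},\hat\Omega)<\delta/L$, whereas the nearest-point bound $\|\mathbf{x}-\mathbf{p}\|\leq d(\mathcal{P},\hat\Omega)$ is only an inequality; this is exactly what guarantees $\det(J_{\mathbf G}(\mathbf{x}))>0$ and not merely $\geq 0$. I would also confirm that the nearest-collocation-point construction applies uniformly over the compact cube $\hat\Omega=[0,1]^3$, so that the estimate holds for \emph{every} $\mathbf{x}$ and the final positivity is genuinely everywhere rather than almost everywhere.
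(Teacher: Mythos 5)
Your argument is correct and is essentially the paper's own proof: pick the nearest collocation point, apply the Lipschitz bound from Lemma~\ref{jacobianlipschitzcontinuous} together with the fill-distance hypothesis to get $|\det(J_{\mathbf G}(\mathbf{x}))-\det(J_{\mathbf G}(\mathbf{p}))|<\delta$, conclude $\det(J_{\mathbf G}(\mathbf{x}))>0$ everywhere, and finish with Lemma~\ref{injectivecondition}. Your write-up is in fact slightly more careful than the paper's about where the nearest-point bound $\|\mathbf{x}-\mathbf{p}\|\leq d(\mathcal{P},\hat{\Omega})$ comes from and where the strictness of the final inequality originates.
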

\begin{proof}
For $\forall \mathbf{x}\in\hat{\Omega}$, $\exists \mathbf{p}\in\mathcal{P}$ such that $d(\mathcal{P},\hat{\Omega})<\delta/L$.
According to Lemma~\ref{jacobianlipschitzcontinuous}, we have
\begin{equation*}
\|\det(J_{\mathbf{G}}(\mathbf{x}))-\det(J_{\mathbf{G}}(\mathbf{p}))\|\leq L\|\mathbf{x}-\mathbf{p}\|\leq Ld(\mathcal{P},\hat{\Omega})< \delta,
\end{equation*}
Therefore,
\begin{equation*}
\det(J_{\mathbf{G}}(\mathbf{x}))\geq \det(J_{\mathbf{G}}(\mathbf{p}))-L\|\mathbf{x}-\mathbf{p}\|>\delta-\delta=0.
\end{equation*}
Thus the conclusion follows by Lemma~\ref{injectivecondition}.
\end{proof}

\subsubsection{Max-min optimization model}
Now we propose a mathematical model to compute a bijective map $\mathbf G$ from $\hat\Omega$ to $\Omega$:

\begin{equation}\label{optimization_bijectivity}
\begin{split}
    \max\limits_{\mathbf{G}}\min\limits_{k} \quad &\det(J_{\mathbf{G}}(\mathbf{p}_k))\\
    \mathrm{s.t.}\quad &\det(J_{\mathbf{G}}(\mathbf{p}_k))\geq \delta,\quad \mathbf{p}_k\in\mathcal{P},\quad k=1,\ldots,N,\\
    &\mathcal{E}(\mathbf{G})\leq \epsilon,\\
    &\mathbf{G}\big|_{\partial \hat{\Omega}}\ \text{is given},
\end{split}
\end{equation}
where $\mathcal{E}(\mathbf{G})$ is a thin-plate energy defined as
\begin{equation*}
\begin{split}
    \mathcal{E}(\mathbf{G}) =\int_{\hat{\Omega}}D_{fair}(\mathbf{G};\mathbf{x})\mathrm{d}\xi\mathrm{d}\eta\mathrm{d}\zeta
\end{split}
\end{equation*}
which is used to control the fairness of the map. $\delta$ and $\epsilon$ are two positive thresholds. In our model,
we maximize the minimum of the determinants of the Jacobian $\det(J_{\mathbf{G}}(\mathbf{p}_k))$ at the collocation points,
which is equivalent to minimize the volume distortion in a sense.

By introducing an auxiliary slack variable $t$, the above max-min optimization problem~(\ref{optimization_bijectivity}) can be converted into the following optimization problem:
\begin{equation}\label{optimization_bijectivityequal}
\begin{split}
    \max\limits_{\mathbf{G},t}\quad &t-\lambda \mathcal{E}(\mathbf{G})\\
    \mathrm{s.t.}\quad &\det(J_{\mathbf{G}}(\mathbf{p}_k))\geq t,\quad \mathbf{p}_k\in\mathcal{P},\quad k=1,\ldots,N,\\
    &t\geq \delta,\\
    &\mathbf{G}\big|_{\partial \hat{\Omega}}\ \text{is given}.
\end{split}
\end{equation}
where $\lambda>0$ is a some free parameter.

\begin{remark}
\rm If we apply the sufficient conditions described in Section~\ref{sec:BijectiveCondition} as constraints to ensure the bijectivity of the map, then our mathematical takes the similar form:
\begin{equation}
\label{optimization_remark}
\begin{split}
    \max\limits_{\mathbf{G}}\min\limits_{k} \quad &B_{k}\\
    \mathrm{s.t.}\quad &B_{k}\geq \delta, \quad k=1,\ldots,\widetilde{N},\\
    &\mathcal{E}(\mathbf{G})\leq \epsilon,\\
    &\mathbf{G}\big|_{\partial \hat{\Omega}}\ \text{is given},
\end{split}
\end{equation}
where the number of the inequality constraints $\widetilde{N}$  is generally much larger than $((l-p+1)(3p-1)+1)\times((m-q+1)(3q-1)+1)\times((n-r+1)(3r-1)+1)\approx 27pqrlmn$, and it is much more than that in the model~(\ref{optimization_bijectivity}).
Table~\ref{ConstraintsComparisons} presents a comparison of the number of constraints in the two mathematical models for five examples. The number of constraints in the model (\ref{optimization_bijectivity}) is only about one fourth of that by the model (\ref{optimization_remark}). Thus our optimization  model~(\ref{optimization_bijectivity}) is much less computationally expensive than the model (\ref{optimization_remark}).
\end{remark}

\begin{table*}[!htbp]
	\begin{center}
		\scriptsize
        \renewcommand{\arraystretch}{1.5}
		\begin{tabular}{|c|c|c|}\hline
			\multirow{2}{*}{Model}&\multicolumn{2}{c|}{\#Constraints}\\
            \cline{2-3}
            &(\ref{optimization_bijectivity})&(\ref{optimization_remark})\\
			\hline
           Tooth (Fig.~\ref{ToothParameterization})&99,591&417,115\\
           \hline
           Duck (Fig.~\ref{DuckParameterization})&83,792&303,644\\
           \hline
           Lion (Fig.~\ref{LionParameterization})&133,025&397,781\\
           \hline
           Femur (Fig.~\ref{FemurParameterization})&204,034&834,230\\
           \hline
           Max Planck (Fig.~\ref{MaxPlanckParameterization})&155,939&510,378\\
           \hline
		\end{tabular}
		\caption{Statistics of the number of constraints in the optimization models~(\ref{optimization_bijectivity}) and~(\ref{optimization_remark}). For convenience, we only count the number of constraints for~(\ref{optimization_remark}) at the first level (i.e., the level after the harmonic mapping), that is, the numbers listed here are the least numbers for (\ref{optimization_remark}); whereas the number of constraints for~(\ref{optimization_bijectivity}) is at the last level.}
		\label{ConstraintsComparisons}	
	\end{center}
\end{table*}

\subsubsection{A coarse-to-fine parameterization method}
\label{sec:approach:coarsetofine}

A key issue to solve the optimization problem (\ref{optimization_bijectivityequal}) is to select the collocation points appropriately. Too many collocation points introduce too many nonlinear constraints in the optimization problem, and computational complexity of the problem increases tremendously.  To accelerate the computational process, we propose a coarse-to-fine strategy with a Jacobian gradient guided selection scheme for the collocation points. Specifically, initialize $\mathcal{P}$ as the empty set and ${\mathbf G}$ as the harmonic map computed in Section~\ref{subsection:initialmap}, our algorithm repeats the following two steps:
\begin{enumerate}
\item Convert the B-spline representation of $\det(J_{{\mathbf{G}}})$ into B{\'e}zier forms, and for each B{\'e}zier polynomial $B^\gamma$ check the positivity of $B^\gamma$ by looking at its B{\'e}zier coefficients or those of its subdivided B{\'e}zier polynomials (to certain level) as described in Section~\ref{sec:BijectiveCondition}. If all the B{\'e}zier coefficients are above $\delta (>0)$, then $B^\gamma$ is positive over its defining cell $\hat\Omega_\gamma$ and we remove the collocation points (if any) in $\hat\Omega_\gamma$ from $\mathcal P$; otherwise add new collocation points from the cell $\hat\Omega_\gamma$ to $\mathcal P$. See Figures~\ref{CellSubdivison} and~\ref{drawmesh} for an illustration in 2D case.
\item Compute a parameterization $\mathbf{G}$ by solving the problem~(\ref{optimization_bijectivityequal}) using the algorithm presented in Section~\ref{numericalalgorithm}.
\end{enumerate}

The above procedure runs until the parameterization becomes bijective or the level counter reaches the maximal level $L_{\max}$. The bijectivity of the parameterization is achieved if the bijectivity conditions for all the cells are satisfied.

\begin{figure}[!ht]
\centering
\includegraphics[width=0.9\textwidth]{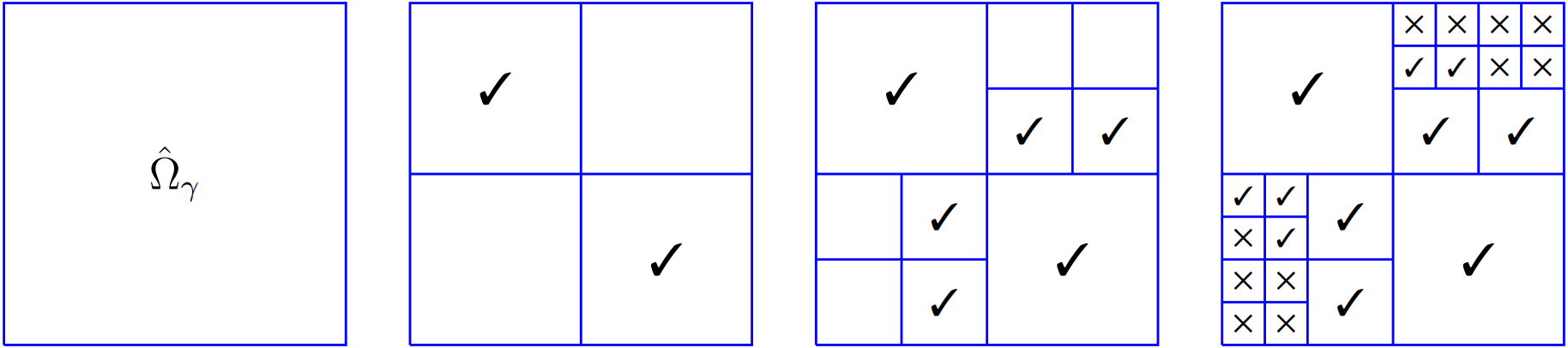}
\caption{Checking the positivity of a B{\'e}zier polynomial by subdivision: the sub-cell marked with `\ding{51}' or `\ding{53}' denotes the positivity of the corresponding B{\'e}zier polynomial is or is not satisfied respectively.}
\label{CellSubdivison}
\end{figure}

\begin{figure}[!htbp]
    \centering
    \begin{tabular}{c@{\hspace{0.05\textwidth}}c@{\hspace{0.05\textwidth}}c}
 	\includegraphics[width=0.22\textwidth]{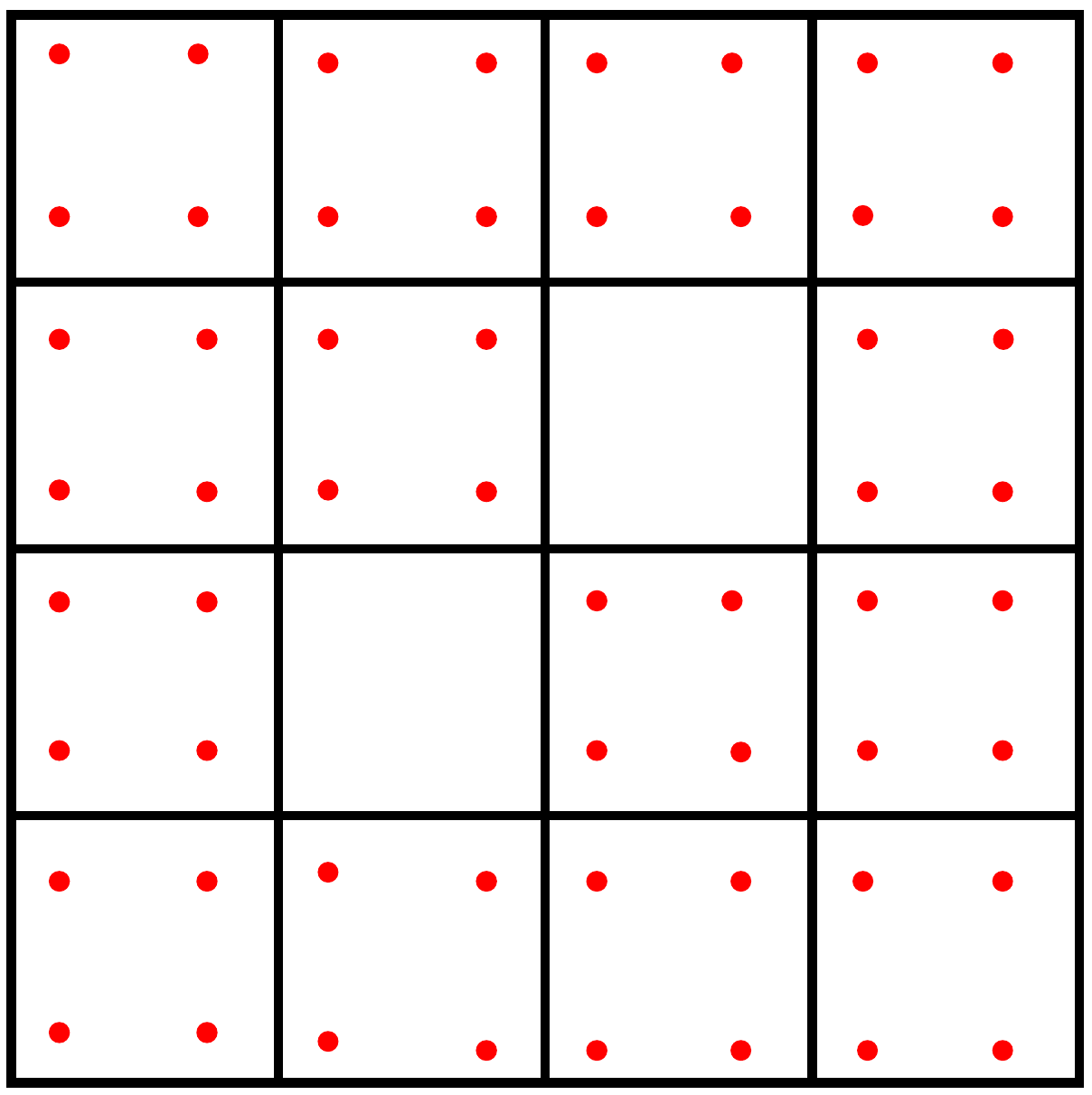}&
    \includegraphics[width=0.22\textwidth]{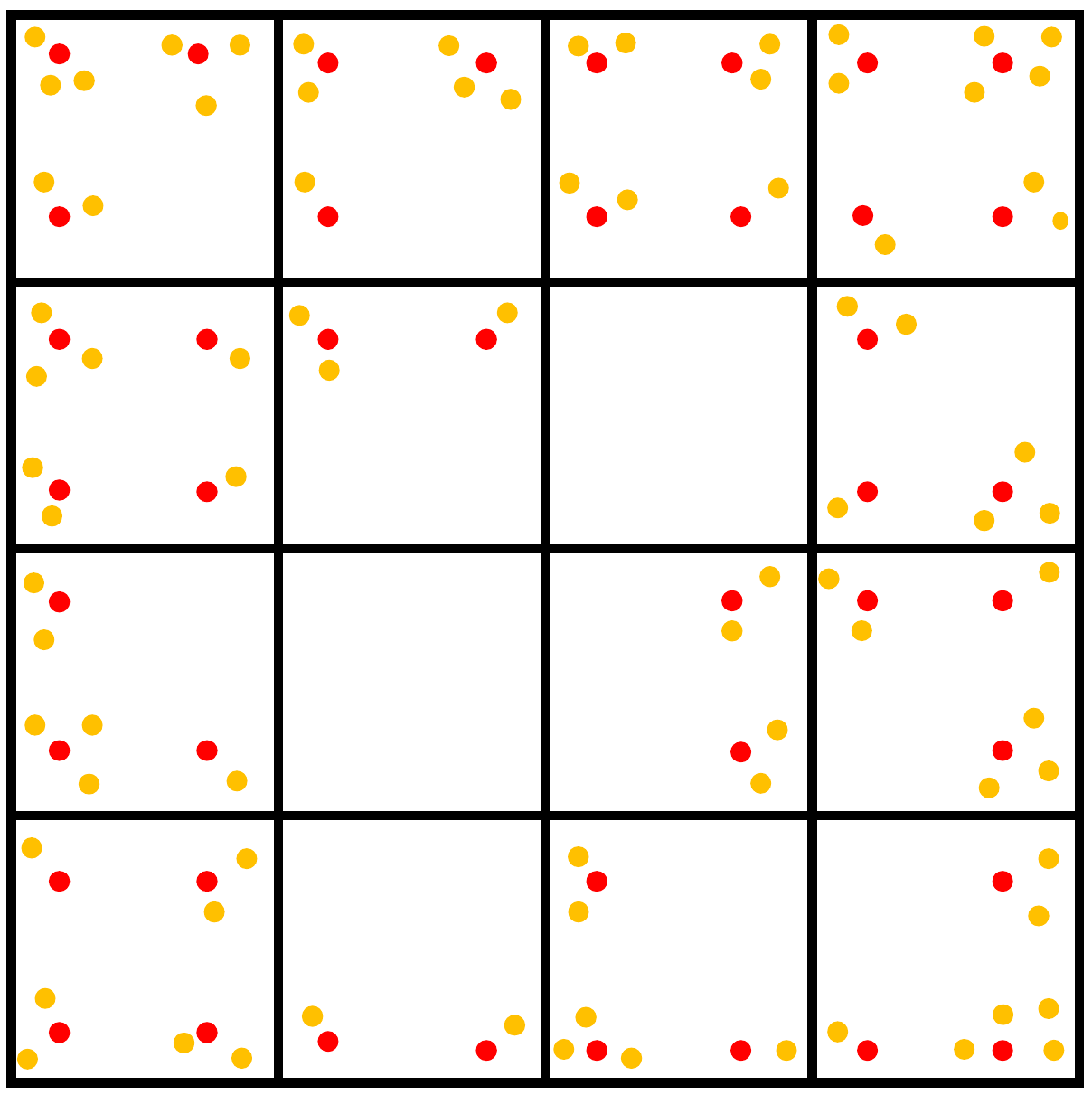}&
    \includegraphics[width=0.22\textwidth]{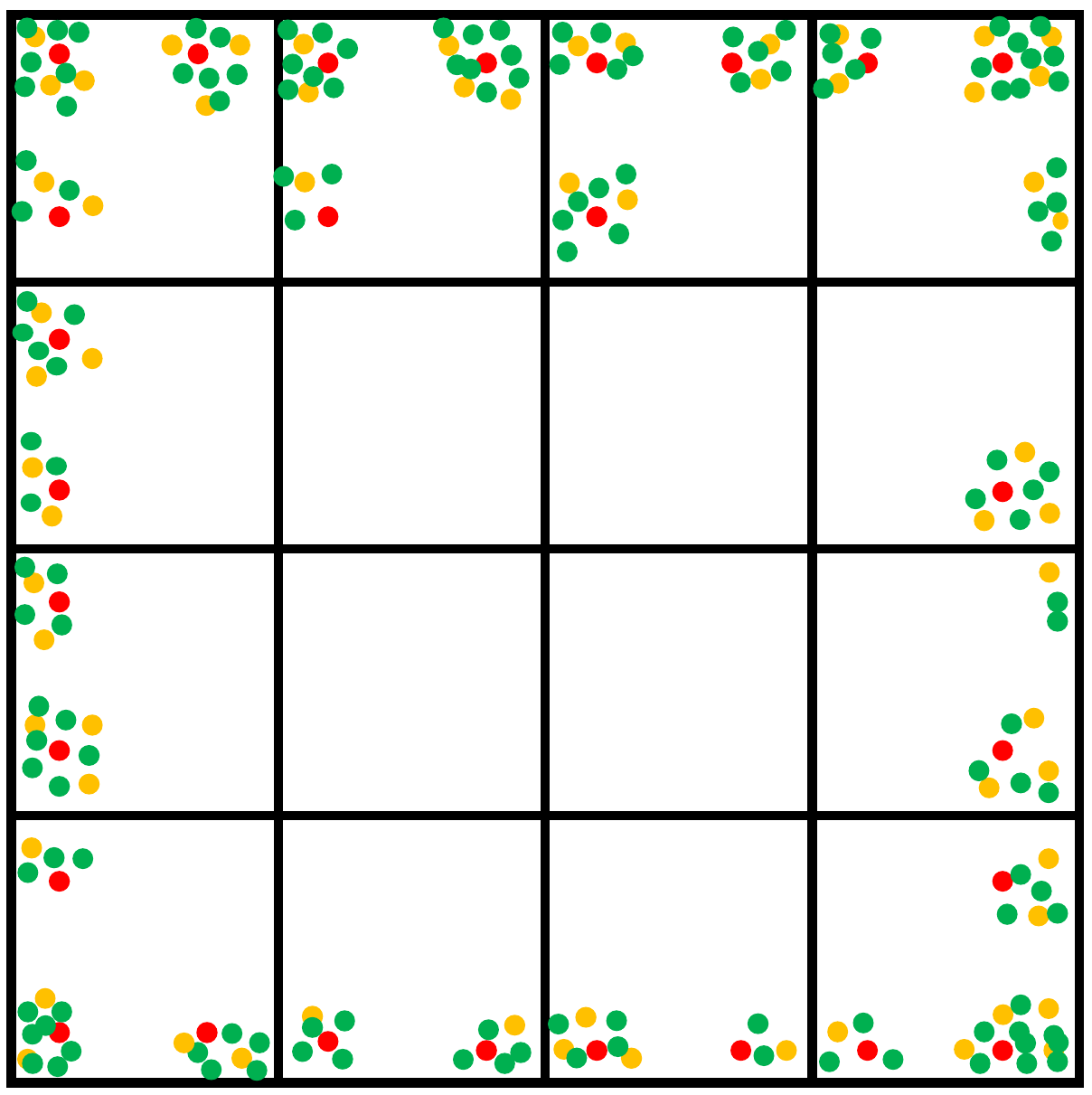}\\
    \end{tabular}
	\caption{An illustration of generating collocation points in 2D: from left to right: collocation points at level $0$, level $1$ and level $2$. The red, orange and green ones denote the new generated points of level $0$, $1$, $2$ respectively.}
    \label{drawmesh}
\end{figure}

Over the cell $\hat{\Omega}_\gamma$ at level $l$, obviously we need only to add the new collocation points in those sub-cells marked with `\ding{53}' (see Fig.~\ref{CellSubdivison}), and these new collocation points are generated based on the gradient of $\det(J_{\mathbf G})$, i.e., more points are needed where the gradient is large while fewer ones where the gradient is small. Specifically, for a sub-cell marked with `\ding{53}', denoted as $\hat{\Omega}_{\gamma}^s$, we firstly uniformly subdivide it into $L\times M\times N$ sub-cuboids $\{\hat{\Omega}_{\gamma_{ijk}}^s\}$, ($i=1,\ldots,L$, $j=1,\ldots,M$, $k=1,\ldots,N$), then the number of new collocation points added to each sub-cuboid $\hat{\Omega}_{\gamma_{ijk}}^s$ can be computed as follows

\begin{equation*}
\hat{N}_{\gamma_{ijk}}^s=\left\lceil2^{3l+9}\cdot\frac{\text{Vol}(\hat{\Omega}_\gamma^s)}{\text{Vol}(\hat{\Omega}_\gamma)}\cdot\frac{\exp{\left(-\frac{\left(g_{\gamma_{ijk}}^s
-g_{\text{max}}\right)^2}{\sigma}\right)}}{\sum\limits_{i,j,k} \exp{\left(-\frac{\left(g_{\gamma_{ijk}}^s-g_{\text{max}}\right)^2}{\sigma}\right)}}\right\rceil
\end{equation*}
where $g_{\gamma_{ijk}}^s$ is the value of $|\nabla \det(J_{\mathbf{G}})|$ at the center of $\hat{\Omega}_{\gamma_{ijk}}^s$, $g_{\max}$ is the maximum value of $\{g_{\gamma_{ijk}}^s\}$, $\sigma$ is an adjusting positive parameter, and $\text{Vol}(\hat{\Omega}_\gamma^s)$, $\text{Vol}(\hat{\Omega}_\gamma)$ denote the volumes of $\hat{\Omega}_\gamma^s$, $\hat{\Omega}_\gamma$ respectively.
Fig.~\ref{BijectiveProcess} illustrates  the whole process of our parameterization method.

\begin{figure}[!htbp]
    \centering
    \subfigure[Boundary NURBS]{
        \label{BijectiveProcess:BoundaryNurbs}
        \includegraphics[width=0.19\textwidth]{boundarynurbsmaxplanck15}
    }
    \hspace{0.07\textwidth}
    \subfigure[Harmonic mapping]{
        \label{BijectiveProcess:HarmonicMapping}
        \includegraphics[width=0.18\textwidth]{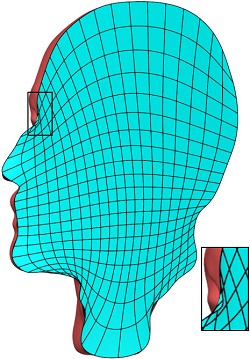}
    }
    \hspace{0.07\textwidth}
    \subfigure[Level 0]{
        \label{BijectiveProcess:LevelOne}
        \includegraphics[width=0.18\textwidth]{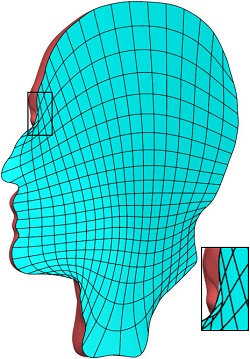}
    }
    \hspace{0.07\textwidth}
    \subfigure[Level 1]{
        \label{BijectiveProcess:LevelTwo}
        \includegraphics[width=0.175\textwidth]{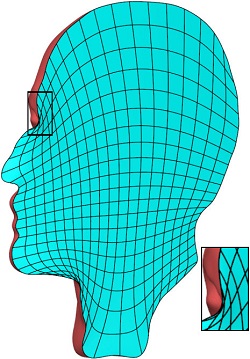}
    }
    \hspace{0.07\textwidth}
    \subfigure[Level 2]{
        \label{BijectiveProcess:LevelThree}
        \includegraphics[width=0.175\textwidth]{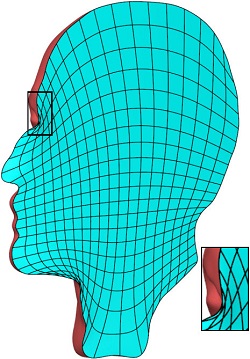}
    }
    \hspace{0.07\textwidth}
    \subfigure[Level 3]{
        \label{BijectiveProcess:LevelFour}
        \includegraphics[width=0.175\textwidth]{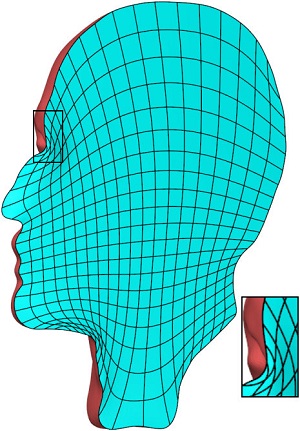}
    }
	\caption{The process of constructing a bijective parameterization for the Max Planck model.~\subref{BijectiveProcess:BoundaryNurbs} Six boundary spline surfaces,~\subref{BijectiveProcess:HarmonicMapping} the initial harmonic mapping,~\subref{BijectiveProcess:LevelOne} to~\subref{BijectiveProcess:LevelFour} show the intermediate parameterizations from level 0 to level 3, and~\subref{BijectiveProcess:LevelFour} is a bijective parameterization.}
    \label{BijectiveProcess}
\end{figure}

\subsubsection{Numerical algorithm}\label{numericalalgorithm}
 The optimization problem~(\ref{optimization_bijectivityequal}) can be solved by the interior-point algorithm~\citep{nocedal2006numerical} which is effective in solving large-scale nonlinear optimization problems.
However, when parameterizing complex domains, the number of variables and constraints in~(\ref{optimization_bijectivityequal}) can be very large, ranging from tens to hundreds of thousands. So many variables and constraints are a big burden for solving nonlinear constrained optimization problems. In order to improve the computational efficiency, we apply a divide and conquer strategy to solve the problem~(\ref{optimization_bijectivityequal}). Generally speaking, the main idea is to split
the parametric domain into several non-overlapping regions and then solve the problem~(\ref{optimization_bijectivityequal})
over each subregion separately, and finally combine the solutions over the subregions to constitute the solution of the original problem. Fig.~\ref{DivideAndConquer} illustrates this approach in 2D case, where the parametric domain $\hat{\Omega}$ is subdivided into eight sub-domains (denoted as $\widetilde{\Omega}_i$, $i=1,\ldots,8$), and the problem~(\ref{optimization_bijectivityequal}) is solved over each sub-domain $\widetilde{\Omega}_i$ separately.

\begin{figure}[!ht]
\begin{center}
\includegraphics[width=0.24\textwidth]{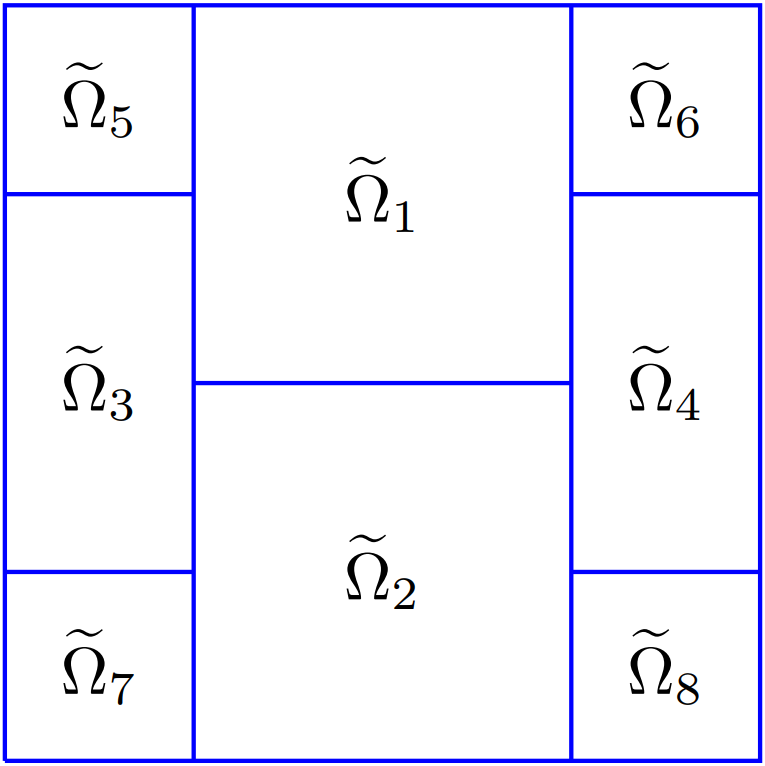}
\caption{\label{DivideAndConquer} An illustration of the divide and conquer approach in 2D.}
\end{center}
\end{figure}

\begin{remark}
In solving the problem~(\ref{optimization_bijectivityequal}) at certain level,  there is no need to compute those control points of the B-spline parameterization (\ref{B-spline rep}) whose influence region is contained in some region marked with `\ding{51}', and they are kept unchanged from previous level. The free variables are the control points whose influence region contains the region marked with `\ding{53}'.
\end{remark}

\begin{remark}
Since B-spline basis functions have local compact supports, only a few variables may contribute to each constraint in~(\ref{optimization_bijectivityequal}), i.e., the constraint conditions are sparse in variables.
This can greatly reduce the computational complexity in our algorithm.
\end{remark}

\begin{remark}
For two adjacent subregions $\widetilde\Omega_i$ and $\widetilde\Omega_j$, the optimization problem (\ref{optimization_bijectivityequal})
may have common variables. In order to keep these sub-problems separable, we solve them in a certain order, such as from the subregion $\widetilde\Omega_1$ to $\widetilde\Omega_8$, and when a sub-problem is solved, its optimization variables are fixed. However, in some situations, there may be no feasible solutions to the subsequent sub-problems due to the lack of degree of freedom. In this case, a local offset function represented by a linear combination of some B-spline basis functions with more compact supports can be added over the corresponding subregion to achieve a feasible solution.
\end{remark}

\subsection{Improving parameterization quality using MIPS}

In the previous subsection, we designed a mathematical model to construct a bijective parameterization of a three dimensional domain. In that model, only volumetric distortion is considered, and consequently the distortion of the parameterization is a bit large in some concave regions, see for example, Fig.~\ref{BijectiveProcess:LevelFour} and Fig.~\ref{QualityImprovement}.
In this subsection, we further improve the quality of the parameterization map constructed in the previous subsection using MIPS--an efficient global parameterization method which minimizes the conformal distortion of the map~\citep{hormann1999mips}.
The problem is formulated as
\begin{equation}\label{qualityimprovementusingmips}
\begin{split}
    \min\limits_{\mathbf{G}} \quad &\int_{\hat{\Omega}}D^2_{con}(\mathbf{G};\mathbf{x})\mathrm{d}\mathbf{x}\\
    \mathrm{s.t.}\quad &\mathbf{G}\big|_{\partial \hat{\Omega}}\ \text{is given},
\end{split}
\end{equation}
where $D_{con}(\mathbf{G};\mathbf{x})$ is the conformal distortion of $\mathbf{G}$ introduced in Section~\ref{sec:distortiondefinition}.

To solve the above optimization problem, one first has to discretize the integral using Gaussian quadrature rules.
Then the problem can be solved efficiently using the L-BFGS which is a quasi-Newton method for solving unconstrained nonlinear minimization problems~\citep{nocedal2006numerical}. The initial solution is provided by the bijective parameterization computed in the last subsection. The bijectivity of the parameterization can be preserved during the optimization process since MIPS can penalize invalid parameterization. In fact the objective function in~(\ref{qualityimprovementusingmips}) goes to infinity when $\det(J_{\mathbf{G}}(\mathbf{x}))$ approaches zero. Numerical examples demonstrate that this step of optimization is essential and effective, see Fig.~\ref{QualityImprovement} for a comparison.
\begin{figure}[!htbp]
    \centering
    \begin{tabular}{c@{\hspace{0.1\textwidth}}c}
    \includegraphics[width=0.25\textwidth]{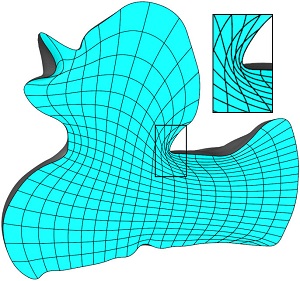}&
    \includegraphics[width=0.25\textwidth]{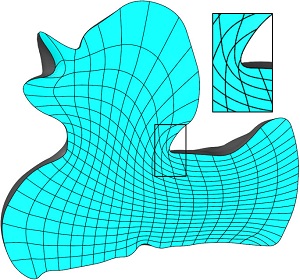}\\
    \includegraphics[width=0.25\textwidth]{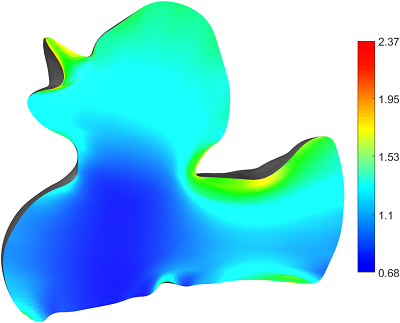}&
    \includegraphics[width=0.25\textwidth]{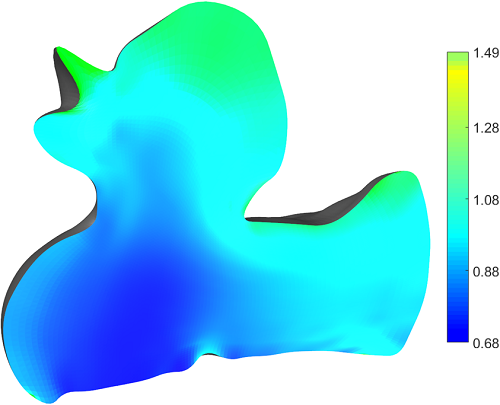}\\
    \end{tabular}
	\caption{Comparison of parameterization results with (right column) and without (left column) MIPS. The top row shows the iso-parametric surfaces and the bottom row shows the colormaps of $\log_5\kappa(J_{\mathbf{G}})$, where $\kappa(J_{\mathbf{G}})$ characterizes the conformal distortion of the parameterization which will be introduced in Section~\ref{sec:result:qualitymetric}.}
    \label{QualityImprovement}
\end{figure}

\smallskip

Now the overall algorithm of our volumetric parameterization is summarized in Algorithm~\ref{alg:PGVPA}.

\begin{algorithm}[H]
\begin{algorithmic}[1]
	\caption{Volumetric parameterization algorithm} \label{alg:PGVPA}
    \REQUIRE ~~\\ 
    The B-spline representations of the six boundary surfaces of the domain $\Omega$, and the parameters $\lambda$ and $\delta$.
    \ENSURE ~~\\ 
   A bijective map $\mathbf{G}$ with low distortion from the unit cube $\hat{\Omega}$ to $\Omega$.\\
    \STATE {Compute a harmonic map by solving the optimization problem~(\ref{harmonicmap}).}\\
	\STATE {With the harmonic map as the initialization, construct a bijective parameterization by solving
            the max-min optimization problem (\ref{optimization_bijectivityequal}) in a coarse-to-fine way
            together with a divide and conquer strategy.}\\
	\STATE {Based on the obtained bijective parameterization, solve the problem~(\ref{qualityimprovementusingmips}) to further improve the quality of the parameterization.}
\end{algorithmic}
\end{algorithm}

\section{Experimental results}
\label{sec:result}
In this section, we apply our volumetric parameterization method to several computational domains, and for each example, the bijectivity and distortion are demonstrated. Furthermore, comparisons with two nonlinear optimization methods~\citep{xu2013analysis,wang2014optimization} are performed. We didn't compare our method with some other competitive methods~\citep{escobar2011new,zhang2012solid,wang2013trivariate,lopez2017spline} since they construct parameterizations from triangle meshes rather than spline boundary surfaces and they utilized T-splines instead of NURBS as the representation of the parameterization. Furthermore, these methods do not guarantee the bijectivity of the parameterization.

\subsection{Implementation details}
Our method is implemented using C++ with visual studio 2017 on windows 10 platform. The optimization problems~(\ref{optimization_bijectivityequal}) and~(\ref{qualityimprovementusingmips}) are solved using the Artelys Knitro's software~\citep{byrd2006knitro}. All experiments are conducted on a laptop computer of an Intel Core i5-7300HQ CPU, 2.5GHz with 8GB memory. There are several parameters which need to be specified. The knot vectors and the degree of the B-spline representation in~(\ref{B-spline rep}) are provided in Table~\ref{PerformanceStatistics}. In the bijective parameterization model~(\ref{optimization_bijectivityequal}), there are two parameters, i.e., $\lambda$ and $\delta$. We typically set $\delta\in [\text{1E-3}, \text{3E-2}]$. The weight $\lambda$ can be used to balance the bijectivity and distortion of the parameterization. Clearly, larger $\lambda$ leads to parameterization results of lower distortion while smaller $\lambda$ can increase the bijectivity of the parameterization. We observe that $\lambda\in [0.5,3]$ provides a good compromise between them.
\subsection{Parameterization results}
\subsubsection{Quality metric for parameterization}
\label{sec:result:qualitymetric}
 We use three metrics---the condition number of the Jacobian matrices, the orthogonality of the iso-parametric elements and the volume distortion of the parameterizaton  to evaluate the quality of a parameterization.

 The condition number $\kappa(J_{\mathbf{G}})$ of the Jacobian matrix $J_{\mathbf{G}}$ defined as
\begin{equation*}
\kappa(J_{\mathbf{G}})=\|J_{\mathbf{G}}\|_F\|J^{-1}_{\mathbf{G}}\|_F
\end{equation*}
indicates whether the Jacobian matrix is ill-conditioned or not. It's easy to see that this metric can characterize the conformal distortion of the parameterization by (\ref{disconformal}).

The orthogonality of iso-parametric structure is also an important quality measure of analysis suitable volumetric parameterization in IGA applications~\citep{xu2014high}, which can be defined according to the differential geometry property of parametric volumes as follows
\begin{equation*}
\mathbf{G}_{\text{orth}}=\left(1-\left|\frac{\mathbf{G}_{\xi}}{\|\mathbf{G}_{\xi}\|_2}\cdot\frac{\mathbf{G}_{\eta}}{\|\mathbf{G}_{\eta}\|_2}\right|\right)\times\left(1-
\left|\frac{\mathbf{G}_{\eta}}{\|\mathbf{G}_{\eta}\|_2}\cdot\frac{\mathbf{G}_{\zeta}}{\|\mathbf{G}_{\zeta}\|_2}\right|\right)\times
\left(1-\left|\frac{\mathbf{G}_{\zeta}}{\|\mathbf{G}_{\zeta}\|_2}\cdot\frac{\mathbf{G}_{\xi}}{\|\mathbf{G}_{\xi}\|_2}\right|\right).
\end{equation*}
Obviously $\mathbf{G}_{\text{orth}}$ achieves its maximal value $1$ if and only if $\mathbf{G}_{\xi}\cdot\mathbf{G}_{\eta}=\mathbf{G}_{\eta}\cdot\mathbf{G}_{\zeta}=\mathbf{G}_{\zeta}\cdot\mathbf{G}_{\xi}=0$. And the larger $\mathbf{G}_{\text{orth}}$ is, the more orthogonal the iso-parametric structure of the parameterization is.
Again this metric is related with conformal distortion.

Besides the above two metrics, another important criteria to evaluate the quality of the parameterization is the volume distortion $D_{vol}({\mathbf G})$ (also called as the scaled Jacobian) as described in Section~\ref{sec:distortiondefinition}.
%
In our experiments, to measure the volume distortion of the parameterization, we uniformly partition the parametric domain $\hat{\Omega}$ into $\hat{L}\times \hat{M}\times \hat{N}$ sub-cuboids $\{\hat{\Omega}_{ijk}\}$, ($i=1,\ldots,\hat{L}$, $j=1,\ldots,\hat{M}$, $k=1,\ldots,\hat{N}$), then the volume distortion over the sub-cuboid $\{\hat{\Omega}_{ijk}\}$, denoted as $D_{vol}({\mathbf{G}})|_{\hat{\Omega}_{ijk}}$, can be computed as
\begin{equation*}
D_{vol}({\mathbf{G}})|_{\hat{\Omega}_{ijk}}=\frac{\int_{\hat{\Omega}_{ijk}}D_{vol}({\mathbf{G}})\mathrm{d}\xi\mathrm{d}\eta\mathrm{d}\zeta}{\text{Vol}({\hat{\Omega}_{ijk}})}
\end{equation*}
where $\text{Vol}({\hat{\Omega}_{ijk}})$ is the volume of $\hat{\Omega}_{ijk}$.

\begin{figure}[!htbp]
    \centering
    \subfigure[Tooth]{
        \label{ParameterizationInput:Tooth}
        \includegraphics[width=0.20\textwidth]{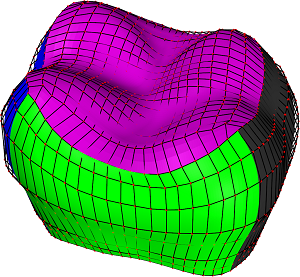}
    }
    \hspace{0.05\textwidth}
    \subfigure[Duck]{
        \label{ParameterizationInput:Duck}
        \includegraphics[width=0.20\textwidth]{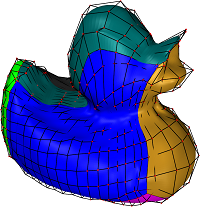}
    }
    \hspace{0.05\textwidth}
    \subfigure[Lion]{
        \label{ParameterizationInput:Lion}
        \includegraphics[width=0.17\textwidth]{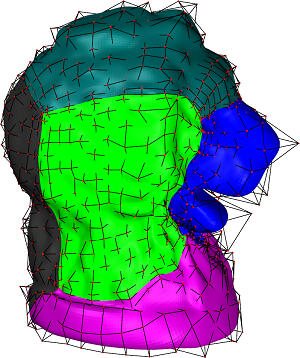}
    }
        \hspace{0.05\textwidth}
    \subfigure[Femur]{
        \label{ParameterizationInput:Femur}
        \includegraphics[width=0.30\textwidth]{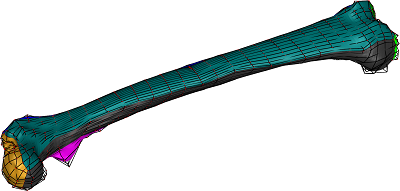}
    }
        \hspace{0.05\textwidth}
    \subfigure[Max Planck]{
        \label{ParameterizationInput:MaxPlanck}
        \includegraphics[width=0.17\textwidth]{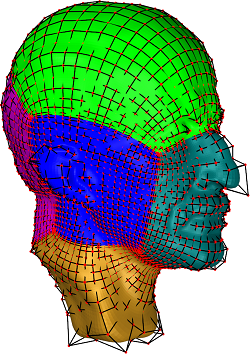}
    }
	\caption{The given boundary B-spline surfaces for five models. ~\subref{ParameterizationInput:Tooth} Tooth,~\subref{ParameterizationInput:Duck} Duck,~\subref{ParameterizationInput:Lion} Lion,~\subref{ParameterizationInput:Femur} Femur and~\subref{ParameterizationInput:MaxPlanck} Max Planck.}
    \label{ParameterizationInput}
\end{figure}

\subsubsection{Parameterization of different computational domains}
We demonstrate five examples (as illustrated in  Fig.~\ref{ParameterizationInput}) to show the effectiveness of the proposed approach, and some comparisons with Xu's method~\citep{xu2013analysis} and Wang's method~\citep{wang2014optimization} are provided.

Fig.~\ref{ToothParameterization} depicts the results of volume parameterization of the tooth model by the three methods. The six given boundary B-spline surfaces are shown in Fig.~\ref{ParameterizationInput:Tooth}. In order to show the parameterization quality, three interior iso-parametric surfaces for Xu's method, Wang's method and our method are provided in the first three rows of Fig.~\ref{ToothParameterization}. The condition number and orthogonality colormaps of these iso-parametric surfaces are shown in the next two rows respectively. In order to verify the volume preserving property of these methods, we calculate the volume distortion distributions, which are illustrated in the last row of Fig.~\ref{ToothParameterization}. The bijectivity of the parameterization for this model is verified by the criterion described in Section~\ref{sec:BijectiveCondition}. From this example it can be seen that our method produces smaller distortion and better orthogonality than the other two methods. Fig.~\ref{DuckParameterization} shows the comparison results of the duck model. In this example, we observe that Xu's method has many self-intersections and the distortion of Wang's method is a bit large in some concave regions, while our method is always bijective and has lower distortion. Fig.~\ref{LionParameterization} shows the parameterization results of the lion model by Wang's method and our method. It can bee seen that Wang's method is not bijective in the extremely concave regions, while our method can still produce bijective parameterization. More results for complex models by our approach are also demonstrated, including the femur model in Fig.~\ref{FemurParameterization} and Max Planck model in Fig.~\ref{MaxPlanckParameterization}. In the last three examples (Fig.~\ref{LionParameterization}-\ref{MaxPlanckParameterization}), Xu's method is unable to achieve a valid parameterization in a limited amount of time. From all these examples, we can conclude that the proposed method can achieve high-quality parameterizations and significantly outperforms the other two methods in terms of bijectivity, distortion and orthogonality.

\begin{figure}[!htbp]
	\centering
	\begin{tabular}{c@{\hspace{0.05\textwidth}}c@{\hspace{0.05\textwidth}}c}
        \includegraphics[width=0.23\textwidth]{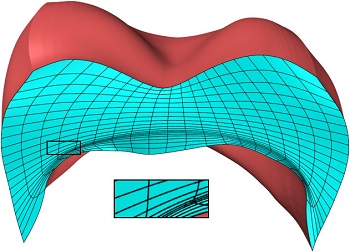}&
        \includegraphics[width=0.23\textwidth]{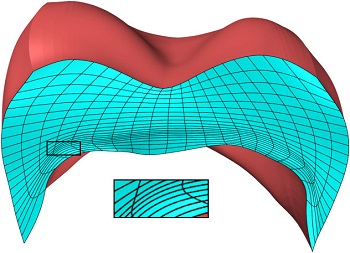}&
         \includegraphics[width=0.23\textwidth]{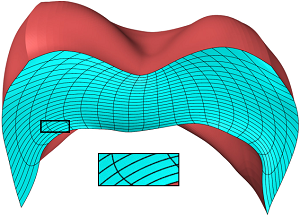}\\
         \includegraphics[width=0.23\textwidth]{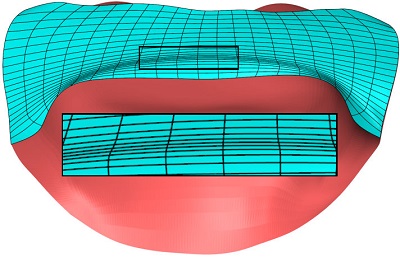}&
        \includegraphics[width=0.23\textwidth]{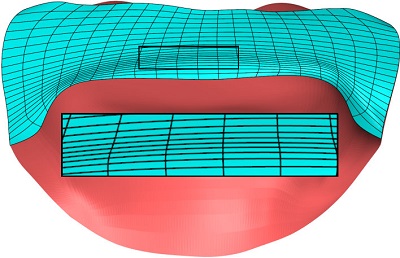}&
         \includegraphics[width=0.23\textwidth]{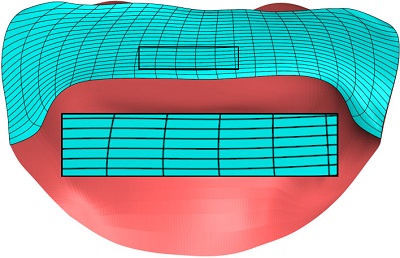}\\
        \includegraphics[width=0.20\textwidth]{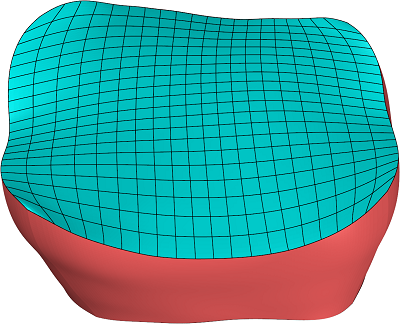}&
        \includegraphics[width=0.20\textwidth]{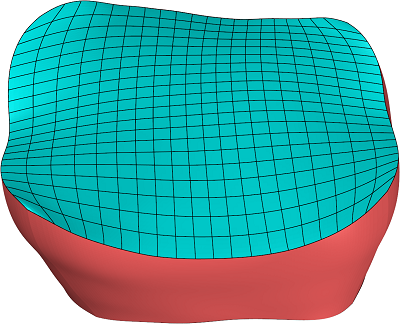}&
         \includegraphics[width=0.20\textwidth]{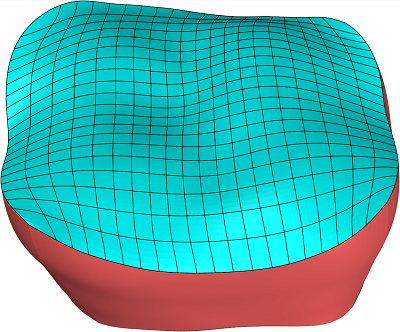}\\
          \includegraphics[width=0.26\textwidth]{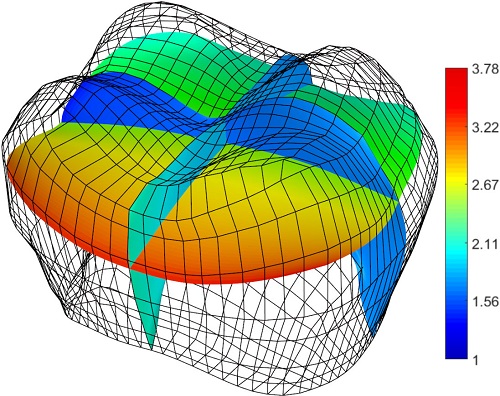}&
        \includegraphics[width=0.26\textwidth]{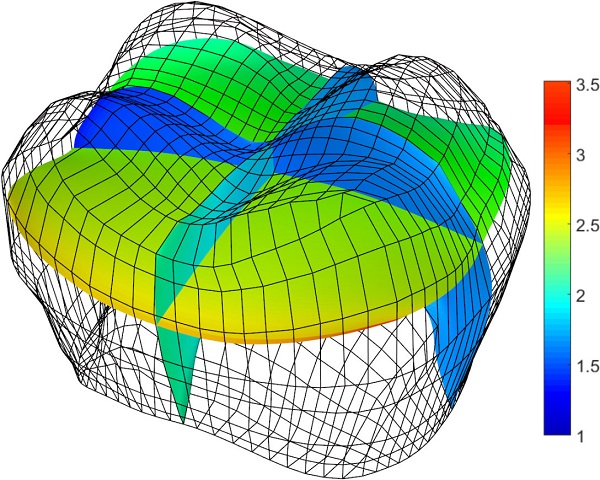}&
         \includegraphics[width=0.26\textwidth]{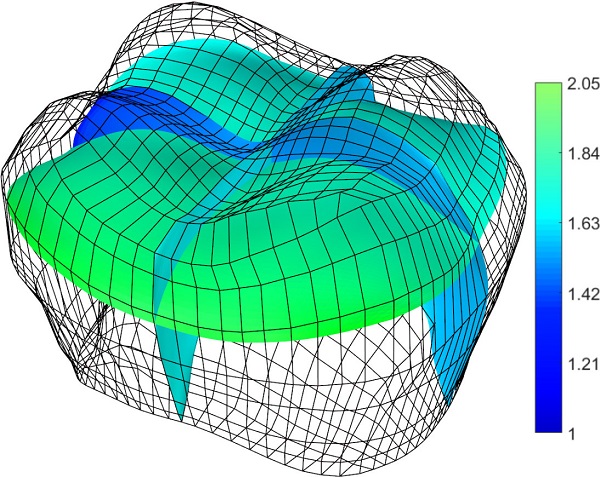}\\
          \includegraphics[width=0.26\textwidth]{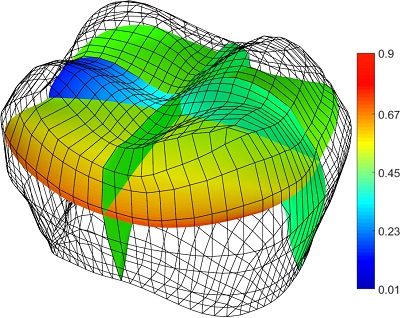}&
        \includegraphics[width=0.26\textwidth]{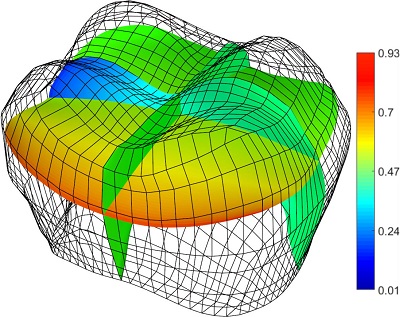}&
         \includegraphics[width=0.26\textwidth]{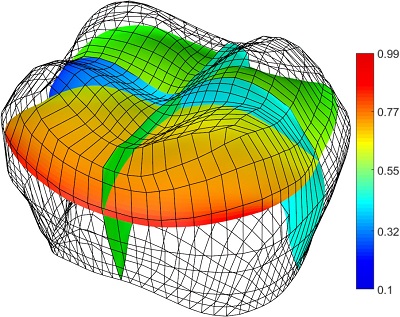}\\
         \includegraphics[width=0.22\textwidth]{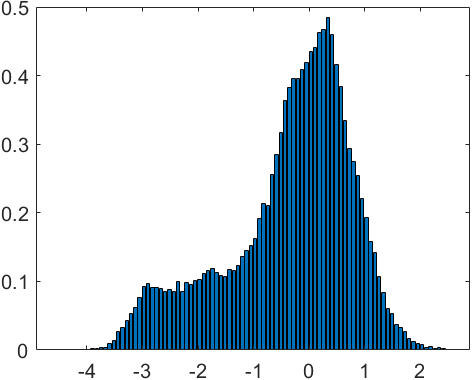}&
        \includegraphics[width=0.22\textwidth]{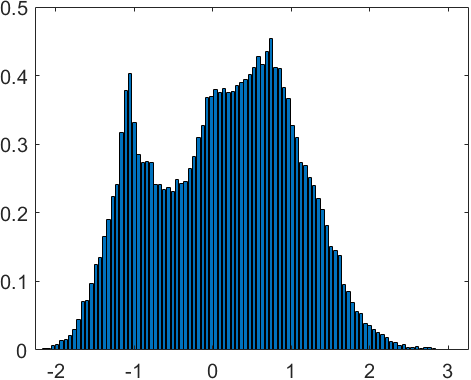}&
         \includegraphics[width=0.22\textwidth]{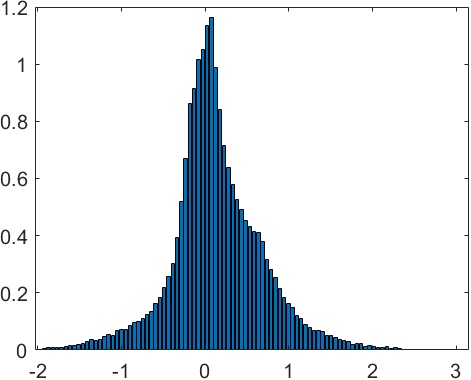}\\
       (a) Xu's & (b) Wang's & (c) Ours
	\end{tabular}
    \caption{Volumetric parameterization of the tooth model by the three methods: (a) Xu's method, (b) Wang's method and (c) our method. The first three rows show the interior iso-parametric surfaces of the parameterization, the next two rows show the colormaps of $\log_3\kappa(J_{\mathbf{G}})$ and $\mathbf{G}_{\text{orth}}$ respectively, and the last row depicts the distributions of volume distortion $\log_2 (D_{vol}(\mathbf{G}))$. Note that the optimal value of $\log_2 (D_{vol}(\mathbf{G}))$ is $0$.}
    \label{ToothParameterization}
\end{figure}

\begin{figure}[!htbp]
	\centering
	\begin{tabular}{c@{\hspace{0.07\textwidth}}c@{\hspace{0.07\textwidth}}c}
        \includegraphics[width=0.21\textwidth]{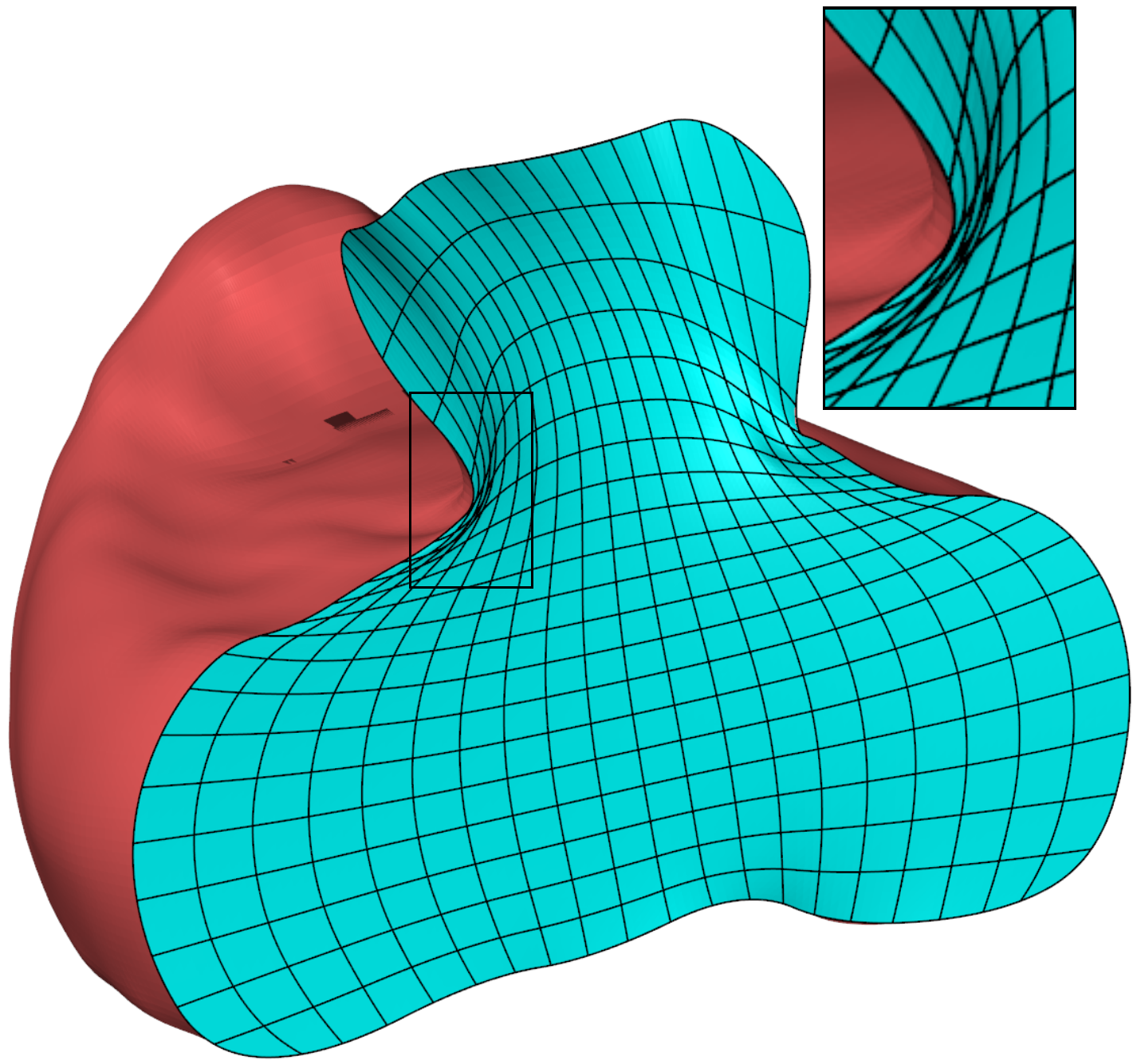}&
        \includegraphics[width=0.21\textwidth]{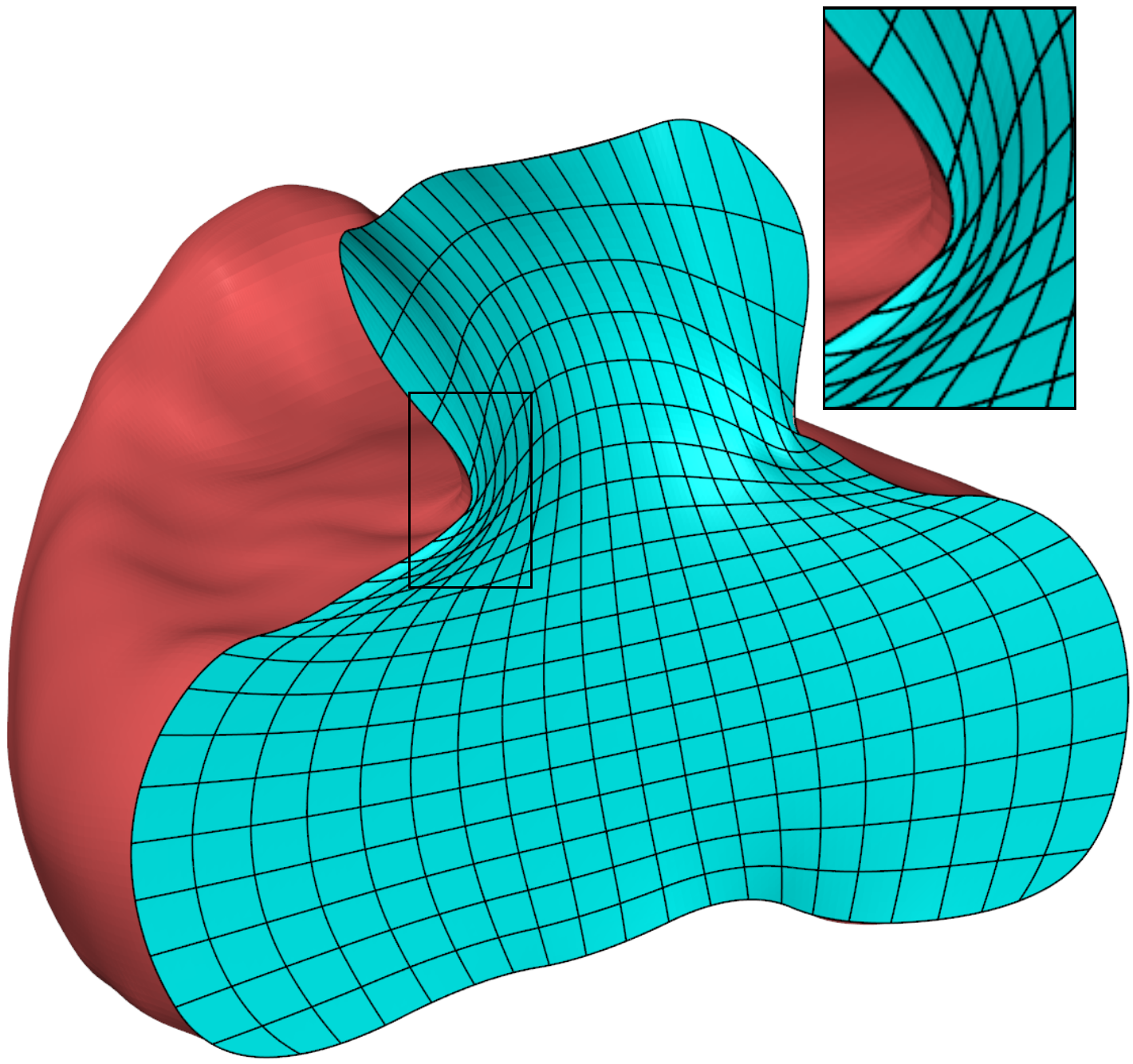}&
         \includegraphics[width=0.21\textwidth]{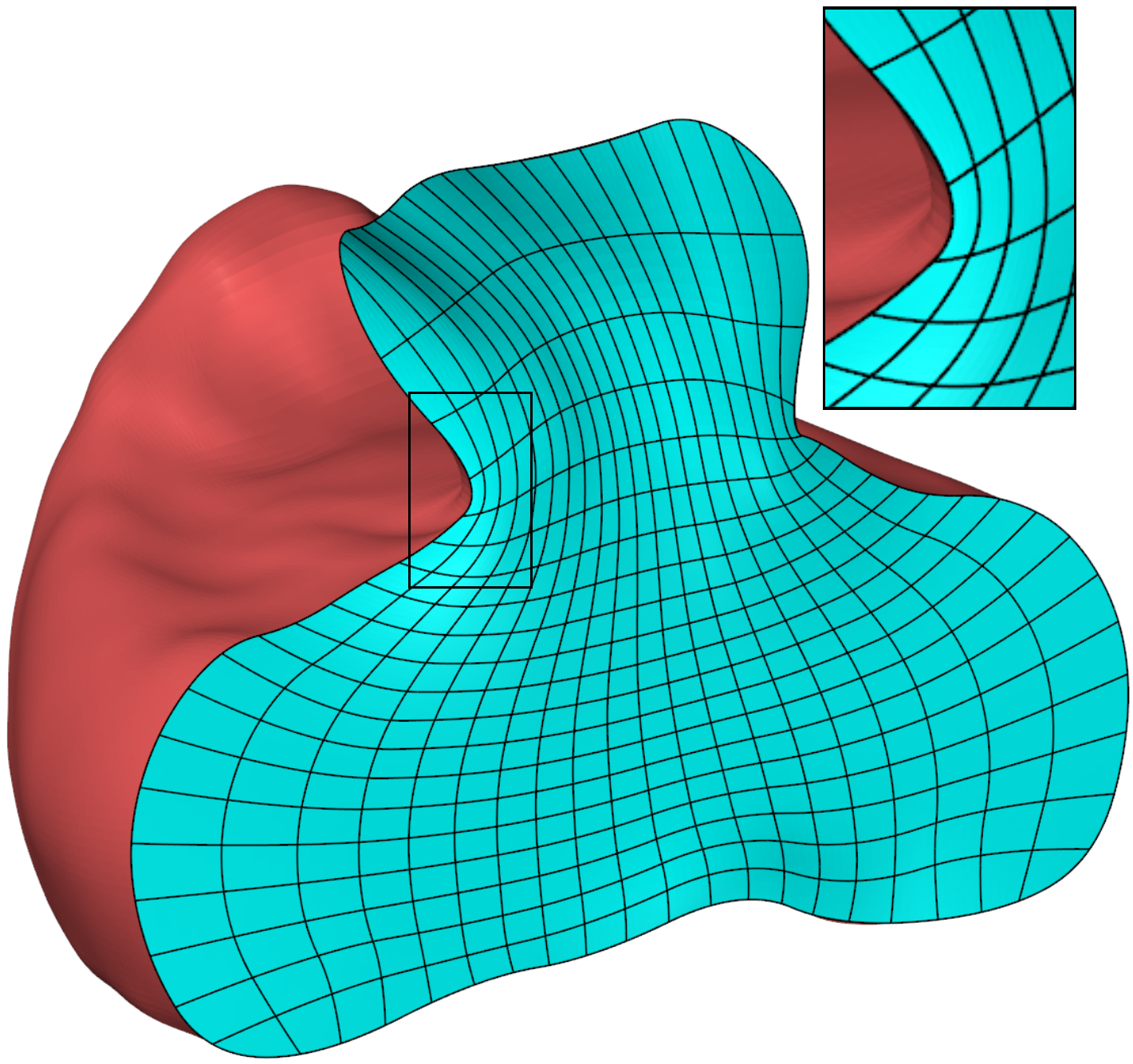}\\
         \includegraphics[width=0.22\textwidth]{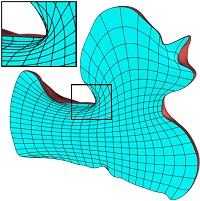}&
        \includegraphics[width=0.22\textwidth]{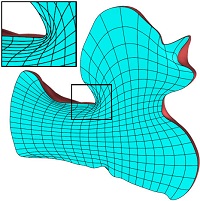}&
         \includegraphics[width=0.22\textwidth]{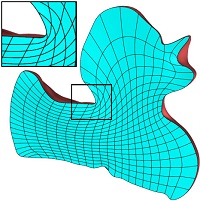}\\
        \includegraphics[width=0.20\textwidth]{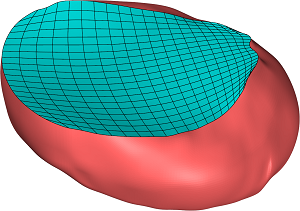}&
        \includegraphics[width=0.20\textwidth]{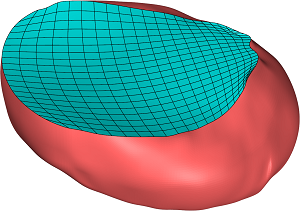}&
         \includegraphics[width=0.20\textwidth]{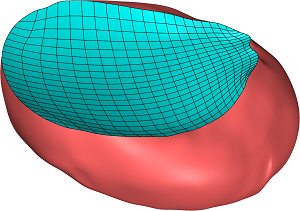}\\
          \includegraphics[width=0.22\textwidth]{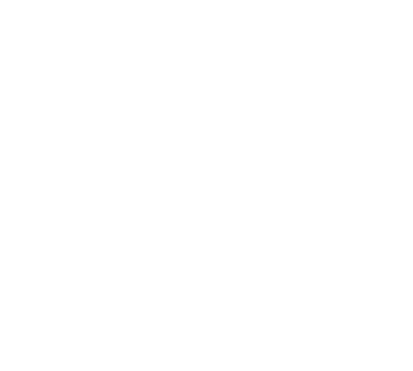}&
        \includegraphics[width=0.22\textwidth]{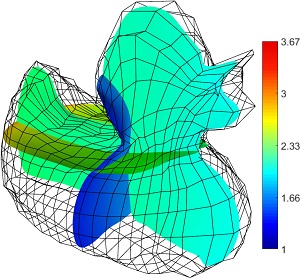}&
         \includegraphics[width=0.22\textwidth]{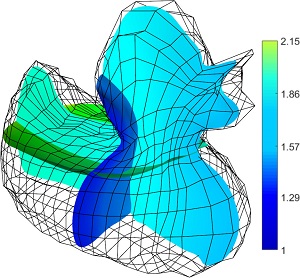}\\
          \includegraphics[width=0.22\textwidth]{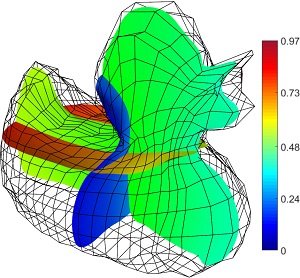}&
        \includegraphics[width=0.22\textwidth]{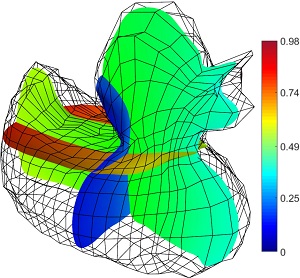}&
         \includegraphics[width=0.22\textwidth]{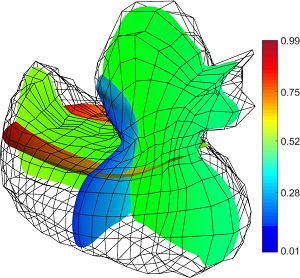}\\
         \includegraphics[width=0.22\textwidth]{blankfig}&
        \includegraphics[width=0.22\textwidth]{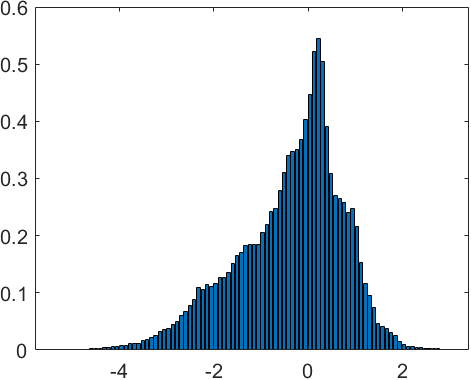}&
         \includegraphics[width=0.22\textwidth]{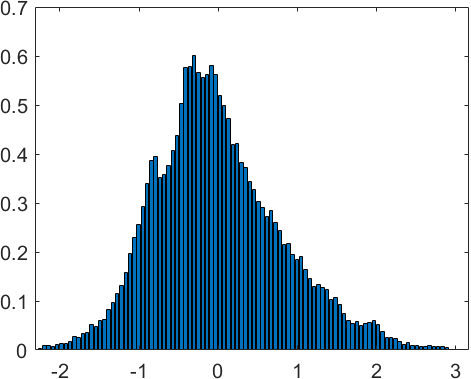}\\
       (a) Xu's & (b) Wang's & (c) Ours
	\end{tabular}
    \caption{Volumetric parameterization of the duck model by the three methods: (a) Xu's method, (b) Wang's method and (c) our method. The first three rows show the interior iso-parametric surfaces of the parameterization, the next two rows show the colormaps of $\log_3\kappa(J_{\mathbf{G}})$ and $\mathbf{G}_{\text{orth}}$ respectively, and the last row depicts the distributions of volume distortion $\log_2 (D_{vol}(\mathbf{G}))$. Here we omit the colormaps of $\log_3\kappa(J_{\mathbf{G}})$ and $\log_2 (D_{vol}(\mathbf{G}))$ for Xu's method due to its invalid parameterization.}
    \label{DuckParameterization}
\end{figure}

\begin{figure}[!htbp]
	\centering
\subfigure[Wang's]{
\label{LionParameterization:Wangs}
\begin{minipage}{0.18\linewidth}
 \includegraphics[width=3.0cm]{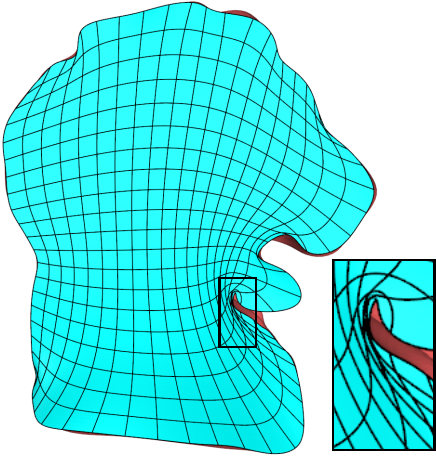}
  \centerline{}
\end{minipage}
\begin{minipage}{0.18\linewidth}
 \includegraphics[width=2.8cm]{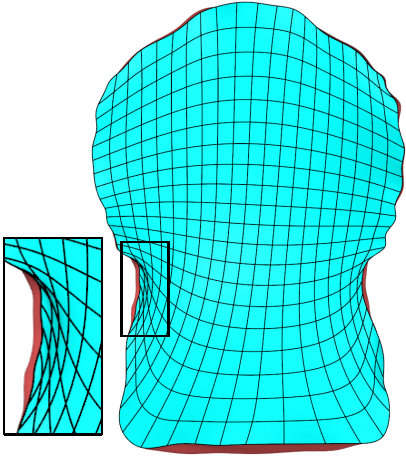}
  \centerline{}
\end{minipage}
\begin{minipage}{0.18\linewidth}
 \includegraphics[width=3.0cm]{blankfig}
  \centerline{}
\end{minipage}
\begin{minipage}{0.20\linewidth}
 \includegraphics[width=3.0cm]{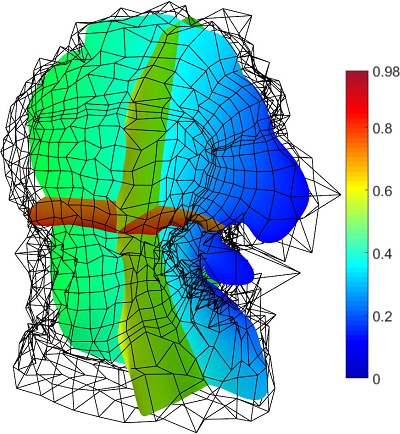}
  \centerline{}
\end{minipage}
\begin{minipage}{0.20\linewidth}
 \includegraphics[width=3.0cm]{blankfig}
  \centerline{}
\end{minipage}
}
\subfigure[Ours]{
\label{LionParameterization:Ours}
\begin{minipage}{0.18\linewidth}
 \includegraphics[width=3.0cm]{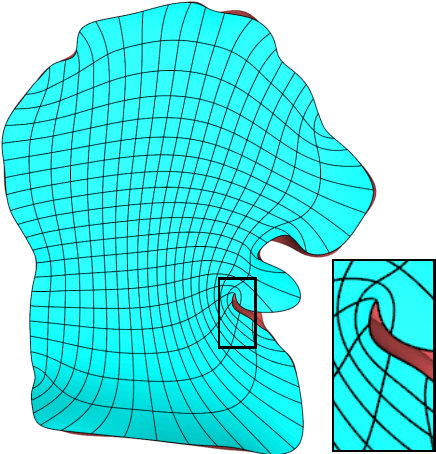}
  \centerline{}
\end{minipage}
\begin{minipage}{0.18\linewidth}
 \includegraphics[width=2.8cm]{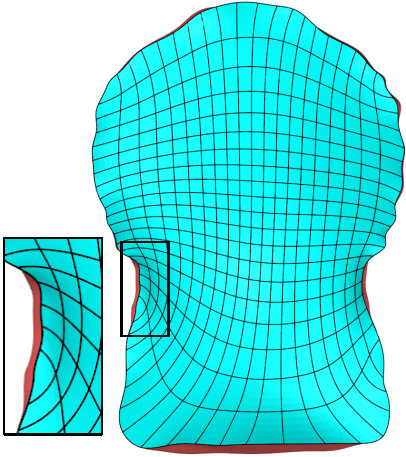}
  \centerline{}
\end{minipage}
\begin{minipage}{0.18\linewidth}
 \includegraphics[width=3.0cm]{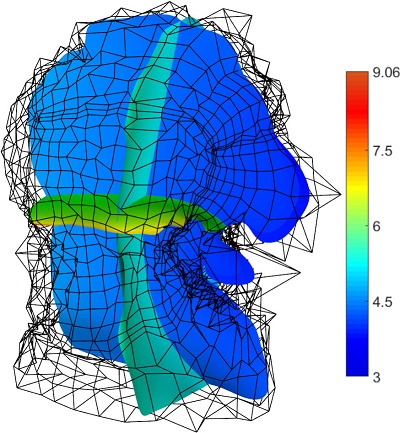}
  \centerline{}
\end{minipage}
\begin{minipage}{0.20\linewidth}
 \includegraphics[width=3.0cm]{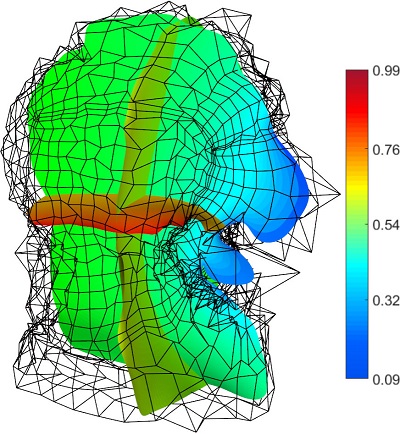}
  \centerline{}
\end{minipage}
\begin{minipage}{0.20\linewidth}
 \includegraphics[width=3.4cm]{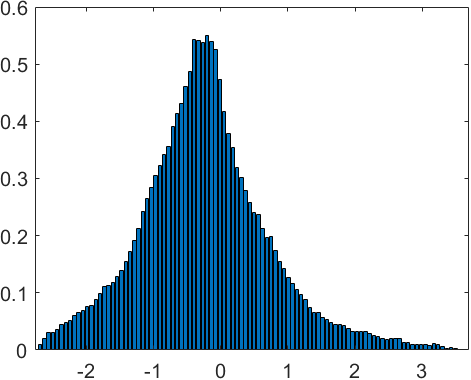}
  \centerline{}
\end{minipage}
}
    \caption{Volumetric parameterization of the lion model by~\subref{LionParameterization:Wangs} Wang's method and~\subref{LionParameterization:Ours} our method. The first two columns show the interior iso-parametric surfaces of the parameterization, the next two columns show the colormaps of $\kappa(J_{\mathbf{G}})$ and $\mathbf{G}_{\text{orth}}$ respectively, and the last column depicts the distributions of volume distortion $\log_2 (D_{vol}(\mathbf{G}))$. Here we omit the colormaps of $\kappa(J_{\mathbf{G}})$ and $\log_2 (D_{vol}(\mathbf{G}))$ for Wang's method due to its invalid parameterization.}
    \label{LionParameterization}
\end{figure}

\begin{figure}[!htbp]
    \centering
    \subfigure[]{
        \label{FemurParameterization:Interior}
        \includegraphics[width=0.40\textwidth]{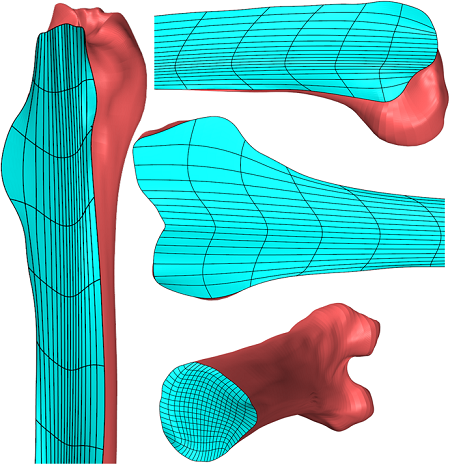}
    }
    \hspace{-0.03\textwidth}
    \subfigure[]{
        \label{FemurParameterization:ColorMap}
        \includegraphics[width=0.25\textwidth]{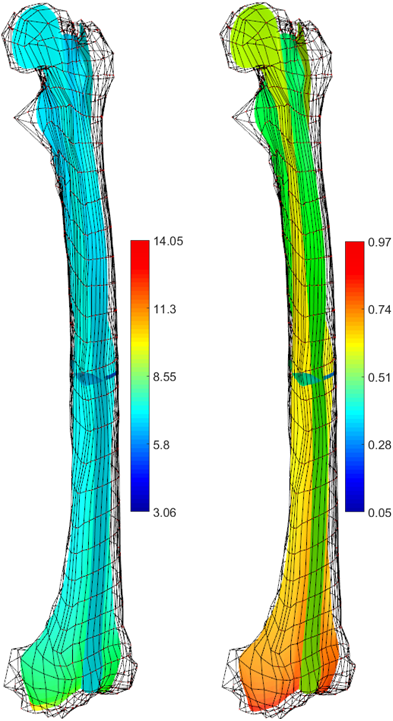}
    }
    \hspace{0.00\textwidth}
    \subfigure[]{
        \label{FemurParameterization:VolumeDistortion}
        \includegraphics[width=0.22\textwidth]{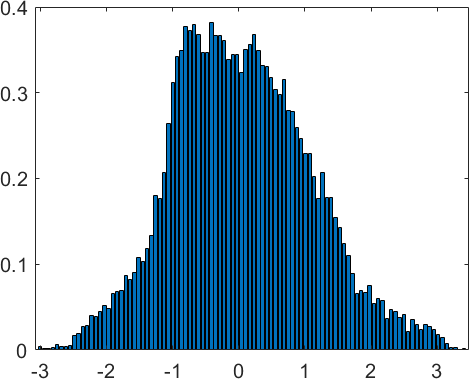}
    }
	\caption{Volumetric parameterization of the femur model by our method:~\subref{FemurParameterization:Interior} the interior iso-parametric surfaces of the parameterization,~\subref{FemurParameterization:ColorMap} the colormaps of $\kappa(J_{\mathbf{G}})$ and $\mathbf{G}_{\text{orth}}$,~\subref{FemurParameterization:VolumeDistortion} the distribution of volume distortion $\log_2 (D_{vol}(\mathbf{G}))$.}
\label{FemurParameterization}
\end{figure}

\begin{figure}[!htbp]
    \centering
    \subfigure[]{
        \label{MaxPlanckParameterization:InteriorOne}
        \includegraphics[width=0.16\textwidth]{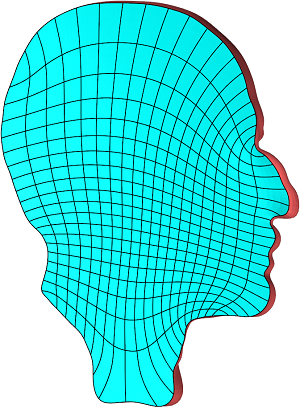}
    }
    \hspace{0.12\textwidth}
    \subfigure[]{
        \label{MaxPlanckParameterization:InteriorTwo}
        \includegraphics[width=0.135\textwidth]{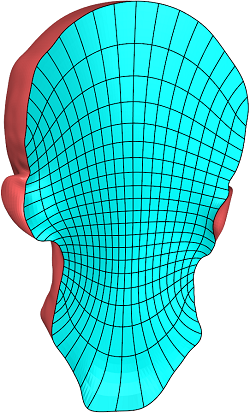}
    }
    \hspace{0.12\textwidth}
    \subfigure[]{
        \label{MaxPlanckParameterization:InteriorThree}
        \includegraphics[width=0.16\textwidth]{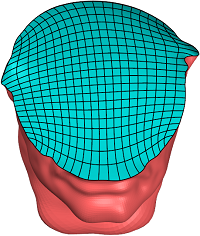}
    }
    \hspace{0.10\textwidth}
    \subfigure[]{
        \label{MaxPlanckParameterization:ConformalDistorColorMap}
        \includegraphics[width=0.20\textwidth]{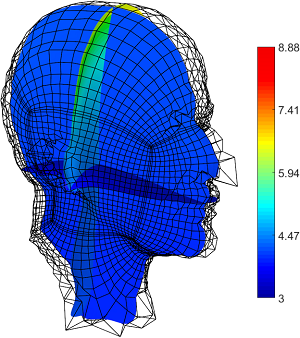}
    }
    \hspace{0.07\textwidth}
    \subfigure[]{
        \label{MaxPlanckParameterization:OrthDistorColorMap}
        \includegraphics[width=0.20\textwidth]{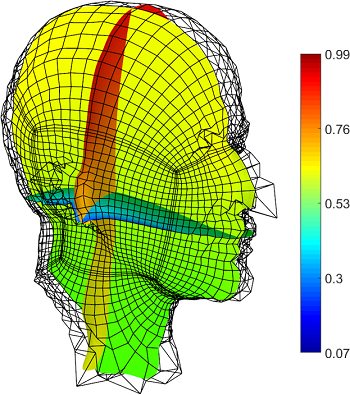}
    }
    \hspace{0.07\textwidth}
    \subfigure[]{
        \label{MaxPlanckParameterization:VolumeDistortion}
        \includegraphics[width=0.22\textwidth]{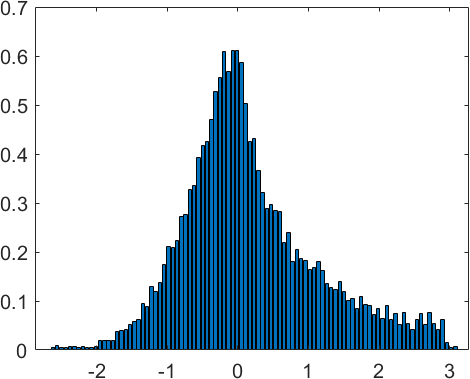}
    }
	\caption{Volumetric parameterization of the Max Planck model by our method:~\subref{MaxPlanckParameterization:InteriorOne} to~\subref{MaxPlanckParameterization:InteriorThree} show the interior iso-parametric surfaces of the parameterization,~\subref{MaxPlanckParameterization:ConformalDistorColorMap} and~\subref{MaxPlanckParameterization:OrthDistorColorMap} show the colormaps of $\kappa(J_{\mathbf{G}})$ and $\mathbf{G}_{\text{orth}}$ respectively and~\subref{MaxPlanckParameterization:VolumeDistortion} depicts the distribution of volume distortion $\log_2 (D_{vol}(\mathbf{G}))$.}
    \label{MaxPlanckParameterization}
\end{figure}

Table~\ref{PerformanceStatistics} summarizes the quantitative results (including distortion, orthogonality and running time) of the examples presented in Fig.~\ref{ToothParameterization}-\ref{MaxPlanckParameterization}, which also validate that our parameterization method is bijective, achieves lower distortion, better orthogonality and comparable computational efficiency than the other two state-of-the-art methods.

\begin{table*}[!htbp]
	\begin{center}
		\scriptsize
        \centering
        \renewcommand{\arraystretch}{1.5}		
        \begin{tabular}{|m{1.40cm}<{\centering}|c|c|c|m{0.85cm}<{\centering}|}\hline
        Model&$l$, $m$, $n$, $p$, $q$, $r$&Method&$\max(\kappa(J_{\mathbf{G}}))$, $\min(\mathbf{G}_{\text{orth}})$, $\max(\mathbf{G}_{\text{orth}})$, $\min(D_{vol}(\mathbf{G}))$, $\max(D_{vol}(\mathbf{G}))$&Time (m)\\
        \hline
		\multirow{3}{*}{\makecell{Tooth\\Fig.~\ref{ToothParameterization}}}&\multirow{3}{*}{20, 20, 7, 3, 3, 3}&Xu's&63.61, 0.01, 0.90, 0.04, 6.06&8.43\\
          \cline{3-5}
          &&Wang's&46.77, 0.01, 0.93, 0.19, 6.96&7.04\\
          \cline{3-5}
          &&Ours&9.50, 0.10, 0.99, 0.24, 5.66&3.31\\
          \hline
          \multirow{3}{*}{\makecell{Duck\\Fig.~\ref{DuckParameterization}}}&\multirow{3}{*}{17, 7, 11, 2, 2, 2}&Xu's&$\infty$, 0.00, 0.97, -4.36, 21.58&10.25\\
          \cline{3-5}
          &&Wang's&56.37, 0.00, 0.98, 0.04, 6.98&5.18\\
          \cline{3-5}
          &&Ours&10.94, 0.01, 0.99, 0.17, 7.15&6.62\\
          \hline
          \multirow{3}{*}{\makecell{Lion\\Fig.~\ref{LionParameterization}}}&\multirow{3}{*}{17, 17, 17, 3, 3, 3}&Xu's&---&---\\
          \cline{3-5}
          &&Wang's&$\infty$, 0.00, 0.98, -0.52, 10.85&24.51\\
          \cline{3-5}
          &&Ours&9.06, 0.09, 0.99, 0.14, 11.07&16.35\\
          \hline
          \multirow{3}{*}{\makecell{Femur\\Fig.~\ref{FemurParameterization}}}&\multirow{3}{*}{29, 13, 9, 2, 2, 2}&Xu's&---&---\\
          \cline{3-5}
          &&Wang's&$\infty$, 0.02, 0.97, -1.96, 11.03&15.88\\
          \cline{3-5}
          &&Ours&14.05, 0.05, 0.97, 0.12, 10.18&23.14\\
          \hline
          \multirow{3}{*}{\makecell{Max Planck\\Fig.~\ref{MaxPlanckParameterization}}}&\multirow{3}{*}{24, 24, 24, 3, 3, 3}&Xu's&---&---\\
          \cline{3-5}
          &&Wang's&17.53, 0.03, 0.98, 0.09, 10.36&39.79\\
          \cline{3-5}
          &&Ours&8.88, 0.07, 0.99, 0.14, 8.29&21.76\\
          \hline
		\end{tabular}
		\caption{Quantitative data for five models. Xu's method fails to run the last three data sets.}
		\label{PerformanceStatistics}	
	\end{center}
\end{table*}

\section{Conclusions and future work}
\label{sec:conclusion}
Volumetric parameterization of computational domains is an essential step in IGA as mesh generation in finite element analysis. In this work, we propose a novel volumetric parameterization approach which includes three main steps: computing an initial harmonic mapping, constructing a bijective mapping by solving a max-min constrained optimization problem and improving parameterization quality using MIPS. Coarse-to-fine and divide-conquer strategies are applied to efficiently solve the optimization problems. Experimental examples demonstrate that our approach can produce a bijective and low-distortion volumetric parameterization which outperforms other state-of-the-art methods.

Regarding the future work, there are two major problems which are worthy of further investigation. Firstly, the running time reported in Table~\ref{PerformanceStatistics} indicates that our algorithm is still computationally expensive, which may be accelerated by GPU computation. Secondly, currently our approach only deals with geometries of genus-zero. For geometries with high genus and more complex boundaries, the domain decomposition method may have to be applied to partition the domain into simply connected regions and then each region can be parameterized using the technique proposed in this paper.

\section*{Acknowledgement}
\label{sec:acknowledgement}
This work was supported by the NSF of China (No. 11571338, 61877056), China Postdoctoral Science Foundation (No. 2018M632548), the Fundamental Research Funds for the Central Universities (No. WK0010460007), and the Open Project Program of the State Key Lab of CAD$\&$CG (No. A1819), Zhejiang University.

\end{document}